\newcommand{\knote}[1]{\todo[inline, color=blue!20]{#1}}
\newcommand{\fnote}[1]{\todo[inline, color=green!20]{#1}}
\newcommand{\newknote}[1]{\todo[inline, color=red!20]{#1}}
\renewcommand{\knote}[1]{}
\renewcommand{\fnote}[1]{}
\renewcommand{\newknote}[1]{}
\newcommand{\keywords}[1]{\par\addvspace\baselineskip
\noindent\keywordname\enspace\ignorespaces#1}
\def\etal{\emph{et al.}\@\xspace}
\def\cf{\emph{cf.}\@\xspace}
\def\ie{\emph{i.e.}\@\xspace}
\DeclareMathOperator{\HNF}{HNF}
\newcommand{\US}{\normalfont\mathbf{U}_\Sigma}
\newcommand{\BS}{\normalfont\mathbf{B}_\Sigma}
\newcommand{\TP}{\normalfont\mathcal{T}_P}
\newcommand{\MP}{\normalfont\mathcal{M}_P}
\newcommand{\McP}{\normalfont\mathcal{M}'_P}
\begin{document}

\mainmatter  

\title{Coinductive Soundness of Corecursive Type Class Resolution}

\titlerunning{Coinductive Soundness of Corecursive Type Class Resolution}

\author{Franti\v{s}ek Farka\inst{1,2}%
\and Ekaterina Komendantskaya\inst{3}\and Kevin Hammond\inst{2}}
\authorrunning{F. Farka, E. Komendantskaya, and K. Hammond}

\institute{University of Dundee, Dundee, Scotland\\
\and
University of St Andrews, St Andrews, Scotland\\
\email{\{ff32,kh8\}@st-andrews.ac.uk}\\
\and
Heriot-Watt University, Edinburgh, Scotland\\
\email{ek19@hw.ac.uk}\\
}

\toctitle{Coinductive Soundness of Corecursive Type Class Resolution}
\tocauthor{Franti\v{s}ek Farka, Ekaterina Komendantskaya, and Kevin Hammond}
\maketitle

\begin{abstract}
  Horn clauses and first-order resolution are commonly used to
  implement \emph{type classes} in Haskell.  Several
  corecursive extensions to type class resolution have recently been proposed,
  with the goal of allowing (co)recursive
  dictionary construction where resolution does not
  terminate.  This paper shows, for the first time, that corecursive type
  class resolution and its extensions are \emph{coinductively sound} with
  respect to the greatest Herbrand models of logic programs and that they
  are \emph{inductively unsound} with respect to the least Herbrand models.
  We establish incompleteness results for various fragments of the proof system.

\keywords{Resolution, Coinduction, Herbrand models, Type classes, Haskell, Horn clauses}
\end{abstract}

\section{Introduction}
\label{sec:intro}

Type classes can be used to implement ad-hoc
polymorphism and overloading in functional languages. The approach originated in
Haskell~\cite{WadlerB89,toplas} and has been further developed in
dependently typed languages~\cite{GonthierZND11,DevrieseP11}. 
For example, it is convenient to define equality for all data
structures in a uniform way.  In Haskell, this is achieved by introducing the
equality class $\mathtt{Eq}$:

\begin{lstlisting}
class Eq x where
  eq :: Eq x => x -> x -> Bool
\end{lstlisting}

\noindent and then declaring any necessary instances of the class, e.g. for pairs and integers:

\begin{lstlisting}
instance (Eq x, Eq y) => Eq (x, y) where
  eq (x1, y1) (x2, y2) = eq x1 x2 && eq y1 y2
instance Eq Int where
  eq x y = primtiveIntEq x y
\end{lstlisting}

\noindent \emph{Type class resolution} is performed by the Haskell compiler and involves
checking whether all the instance declarations are valid. For example, the
following function triggers a check that $\mathtt{Eq~(Int,~
Int)}$ is a valid instance of type class $\mathtt{Eq}$:
\begin{lstlisting}
test :: Eq (Int, Int) => Bool
test = eq (1,2) (1,2)
\end{lstlisting}

\noindent It is folklore that type class instance resolution
resembles SLD-resolution from logic programming. 
The type class instance declarations above
could, for example, be viewed as the following two Horn clauses:
\begin{example}[Logic program $P_{Pair}$]\label{ex:pair}
\begin{alignat*}{3}
    \kappa_1 :~ &&  \mathtt{eq}(x),~\mathtt{eq}(y)
    & ~\Rightarrow \mathtt{eq}(\mathtt{pair}(x, y))\\
    \kappa_2 :~ &&                & ~\Rightarrow \mathtt{eq}(\mathtt{int})
\end{alignat*}
\end{example}
Then, given the query $\mathtt{?}~\mathtt{eq(pair(int,int))}$, SLD-resolution terminates
successfully with the following sequence of inference steps:
\abovedisplayskip=0.2em
\belowdisplayskip=0.2em
\begin{align*}
    \small
    \mathtt{eq}(\mathtt{pair}(\mathtt{int}, \mathtt{int})) \to_{\kappa_1}
    \mathtt{eq}(\mathtt{int}), \mathtt{eq}(\mathtt{int}) \to_{\kappa_2}
    \mathtt{eq}(\mathtt{int}) \to_{\kappa_2}
    \emptyset
\end{align*}
\noindent The proof witness $\kappa_1 \kappa_2 \kappa_2$  (called a ``dictionary'' in Haskell) is constructed by the Haskell
compiler. This is treated internally
as an executable function.

\vspace{1em}

Despite the apparent similarity of type class syntax and type class resolution
to Horn clauses and SLD-resolution they are not, however, identical. At a
syntactic level,  type class instance declarations correspond to a restricted form
of Horn clauses, namely ones that:
\begin{enumerate*}[label=(\roman*)]
    \item \label{r:nonoverlap}  do not \emph{overlap} (\ie
        whose heads do not unify); and that
    \item \label{r:noexist} do not contain existential variables (\ie
        variables that occur in the bodies but not in the heads of the clauses). At
        an algorithmic level,
    \item \label{r:match} type class resolution corresponds to SLD-resolution
        in which unification is restricted to term-matching.
\end{enumerate*}
Assuming there is a clause $B_1, \ldots B_n \Rightarrow A'$, 
then a
query $?~A'$ can be resolved with this clause only if $A$ can be matched against $A'$, \ie if a
substitution $\sigma$ exists such that  $A = \sigma A'$.  In comparison,
SLD-resolution incorporates \emph{unifiers}, as well as \emph{matchers}, \ie it 
also proceeds to resolve the above query and clause in all the cases where $\sigma A =
\sigma A'$ holds.

These restrictions guarantee that type class inference computes the
\emph{principal} (most general) type.  Restrictions
\ref{r:nonoverlap} and \ref{r:noexist}  amount to deterministic inference by
resolution, in which only one derivation is possible for every query.
Restriction \ref{r:match} means that no substitution is applied to a query during 
inference, \ie we prove the query in an implicitly universally quantified form.
It is common knowledge that (as with SLD-resolution) type class
resolution is \emph{inductively sound}, \ie that it is sound relative to the least
Herbrand models of logic programs~\cite{Lloyd87}.  Moreover, in Section~\ref{sec:ind_sound} we establish, for the first time, that  it is also
\emph{universally inductively sound}, \ie that if a formula $A$ is proved by type
class resolution, every ground instance of $A$ is in the least Herbrand model of
the given program. 
%
In contrast to SLD-resolution, however, type class resolution is \emph{inductively incomplete},
\ie it is incomplete relative to least Herbrand models, even for the class of
Horn clauses that is restricted by conditions 
\ref{r:nonoverlap} and 
\ref{r:noexist}.  
For example, given a clause
$\Rightarrow \mathtt{q}(\mathtt{f}(x))$ and  a query $\mathtt{?}~\mathtt{q}(x)$,
SLD-resolution is able to find a proof (by instantiating $x$ with
$\mathtt{f}(x)$), but type class resolution fails.
\newknote{I think the above example is really needed here, otherwise the meaning of the incompleteness claim is unclear }
\fnote{The reviewer 2 asked for a type-class example, I am not aware of any}
L\"ammel and Peyton Jones have suggested~\cite{LammelJ05} an extension to type class
resolution that accounts for some non-terminating cases of type class
resolution.  Consider, for example, the following  mutually defined data structures:

\vskip-0.5em
\begin{lstlisting}
data OddList a   =  OCons a (EvenList a)
data EvenList a  =  Nil | ECons a (OddList a)
\end{lstlisting}

\noindent which give rise to the following instance declarations for the \texttt{Eq} class:
 
\begin{lstlisting} 
instance (Eq a, Eq (EvenList a)) =>  Eq (OddList a) where
    eq (OCons x xs) (OCons y ys) =  eq x y && eq xs ys

instance (Eq a, Eq (OddList a)) => Eq (EvenList a) where
    eq Nil          Nil          =  True
    eq (ECons x xs) (ECons y ys) =  eq x y && eq xs ys
    eq _            _            =  False
\end{lstlisting}

\noindent 
The \texttt{test} function below triggers type class resolution in the Haskell compiler:
\begin{lstlisting}
test :: Eq (EvenList Int) => Bool
test = eq Nil Nil
\end{lstlisting}
%
\noindent However, inference by resolution does not terminate in this case. Consider the
Horn clause representation of the type class instance declarations:
\begin{example}[Logic program $P_{EvenOdd}$]\label{ex:oddeven}
\begin{alignat*}{3}
    \kappa_1 :~ & \mathtt{eq}(x),~\mathtt{eq}(\mathtt{evenList}(x))
        && ~\Rightarrow \mathtt{eq}(\mathtt{oddList}(x))\\
    \kappa_2 :~ & \mathtt{eq}(x),~\mathtt{eq}(\mathtt{oddList}(x))
        && ~\Rightarrow \mathtt{eq}(\mathtt{evenList}(x))\\
    \kappa_3 :~ & && ~\Rightarrow \mathtt{eq}(\mathtt{int})
\end{alignat*}
\end{example}
The non-terminating resolution trace is given by:
\abovedisplayskip=0.2em
\belowdisplayskip=0.2em
\begin{align*}
    \underline{\mathtt{eq}(\mathtt{evenList}(\mathtt{int}))} \to_{\kappa_2} 
    \mathtt{eq}(\mathtt{int}),~\mathtt{eq}(\mathtt{oddList}(\mathtt{int}))
        \to_{\kappa_3}
    \mathtt{eq}(\mathtt{oddList}(\mathtt{int})) \\
    \to_{\kappa_1}
    \mathtt{eq}(\mathtt{int}),~\mathtt{eq}(\mathtt{evenList}(\mathtt{int}))
        \to_{\kappa_3}
    \underline{\mathtt{eq}(\mathtt{evenList}(\mathtt{int}))} \to_{\kappa_2}
    \dots
\end{align*}
A goal $\mathtt{eq}(\mathtt{evenList}(\mathtt{int}))$ is simplified using the
clause $\kappa_2$ to goals $\mathtt{eq}(\mathtt{int})$ and
$\mathtt{eq}(\mathtt{oddList}(\mathtt{int}))$. The first of these is discarded using
the clause $\kappa_3$. Resolution continues using clauses $\kappa_1$ and
$\kappa_3$, resulting in the original goal
$\mathtt{eq}(\mathtt{evenList}(\mathtt{int}))$. It is easy to see that such
a process could continue infinitely and that this goal constitutes a \emph{cycle} (underlined above). 

As suggested by L\"ammel and Peyton Jones~\cite{LammelJ05}, the compiler can obviously terminate the
infinite inference process as soon as it detects the  underlined cycle.  Moreover, it
can also construct the corresponding proof witness in a form of a recursive function.
For the  example above, such a function is given by the fixed point
term $ \nu \alpha. \kappa_2 \kappa_3 (\kappa_1 \kappa_3 \alpha)$, where
$\nu$ is a fixed point operator.
The intuitive reading of such a proof is that an infinite proof of the query
$\mathtt{?}~\mathtt{eq\ (evenList (int))}$ exists, and that its shape is fully
specified by the recursive proof witness function above.
We 
say that the proof
is given by \emph{corecursive type class resolution}.

Corecursive type class resolution is not inductively sound. For example,  the formula 
$\mathtt{eq}(\mathtt{evenList}(\mathtt{int}))$ is not in the least
Herbrand model of the corresponding logic program.  However, as we prove in Section~\ref{sec:coind_sound}, it is
\emph{(universally) coinductively sound}, \ie it is sound relative to the greatest
Herbrand models. For example, $\mathtt{eq}(\mathtt{evenList}(\mathtt{int}))$
is in the greatest Herbrand model of the program $P_{EvenOdd}$.
%
Similarly to the inductive case, corecursive type class resolution is
coinductively incomplete. Consider the clause 
$\kappa_{inf} : \mathtt{p}(x) \Rightarrow \mathtt{p}(\mathtt{f}(x))$.
This clause may be given an interpretation by the greatest (complete)
Herbrand models. However, corecursive type class resolution does not yield
infinite proofs.
\newknote{The above would be hard for understanding if you remove the example of inductive incompleteness}

Unfortunately, this simple method of cycle detection does not work for all non-terminating programs.  Consider
the following example, which defines a data type $\mathtt{Bush}$
(for bush trees), and its corresponding instance for \texttt{Eq}:

\begin{lstlisting}
data Bush a = Nil | Cons a (Bush (Bush a))
instance Eq a, Eq (Bush (Bush a)) => Eq (Bush a) where { ... }
\end{lstlisting}
\noindent Here, type class resolution does not
terminate.  However, it does not exhibit cycles. Consider the Horn clause
translation of the problem:
\begin{example}[Logic program $P_{Bush}$]\label{ex:bush}
    ~
\vskip-1.5em
\begin{alignat*}{3}
    \kappa_1 :~ & && \Rightarrow \mathtt{eq}(\mathtt{int})\\
    \kappa_2 :~ & \mathtt{eq}(x),~\mathtt{eq}(\mathtt{bush}(\mathtt{bush}(x)))
        && \Rightarrow \mathtt{eq}(\mathtt{bush}(x))
\end{alignat*}
\end{example}
\vskip-0.5em
\noindent The derivation below  shows that no cycles arise when we resolve the
query $\mathtt{?}~\mathtt{eq}(\mathtt{bush}(\mathtt{int}))$ against the 
program $P_{Bush}$:
\abovedisplayskip=0.2em
\belowdisplayskip=0.2em
\begin{align*}
\mathtt{eq(bush(int))}
\to_{\kappa_2} \mathtt{eq(int), eq(bush(bush(int))}
\to_{\kappa_1} \ldots 
\to_{\kappa_2} \\ \mathtt{eq(bush(int)), eq(bush(bush(bush(int)))} 
\to_{\kappa_1} 
\dots
\end{align*}

\newknote{be careful with terminology: now mu became nu, should we have ``corecursive'' proof witness?}
\fnote{We only consistently replaced $\mu$ by $\nu$ as the Reviewer 1 suggested.
I do not see a reason why not ``corecursive''}

\noindent
Fu \etal \cite{FuKSP16} have recently introduced an extension to corecursive
type class resolution that allows implicative queries  to be proved by corecursion
and uses the recursive proof witness construction.
For example, in  the above program the Horn formula 
$\mathtt{eq}(x) \Rightarrow \mathtt{eq}(\mathtt{bush}(x))$
can be (coinductively) proven with the recursive proof witness 
$\kappa_3 = \nu \alpha . \lambda \beta. \kappa_2 \beta (\alpha (\alpha \beta))$. 
If we add this Horn clause as a third clause to our program, we obtain a proof
of $\mathtt{eq(bush(int))}$ by applying $\kappa_3$ to $\kappa_1$.  In this case, it
is even more challenging to understand whether the proof $\kappa_3 \kappa_1$ of
$\mathtt{eq(bush(int))}$ is indeed sound: whether inductively, coinductively or in any
other sense.  In Section~\ref{sec:ext_coind_sound}, we establish, for the first
time, \emph{coinductive soundness} for proofs of such implicative queries, 
relative to the greatest Herbrand models of logic programs.
Namely, we determine that proofs that are obtained by extending the proof context with  coinductively proven Horn clauses
(such as $\kappa_3$ above) are coinductively sound but inductively unsound.
This result completes our study of the semantic properties of corecursive type class resolution.
Sections~\ref{sec:ind_sound} and~\ref{sec:ext_coind_sound} summarise our arguments concerning the inductive and coinductive incompleteness of corecursive type class resolution. 

\subsubsection{Contributions of this paper}

By presenting the described results, we answer 
three research questions:
\begin{enumerate}[label=(\arabic*)]
    \item \label{rq:1} whether type class resolution and its two recent
        corecursive extensions~\cite{FuKSP16,LammelJ05} are sound relative to the
        standard (Herbrand model) semantics of logic programming;
    \item \label{rq:2} whether these new extensions are indeed ``corecursive'',
        i.e. whether they are better modelled by the greatest Herbrand model semantics rather
        than by the least Herbrand model semantics; and
    \item \label{rq:3} whether the context update technique given
        in~\cite{FuKSP16} can be reapplied to logic programming and can be
        re-used in its corecursive dialects such as CoLP~\cite{SimonBMG07} and
        CoALP~\cite{KomendantskayaJ15} or, even broader, can be incorporated
        into program transformation techniques~\cite{AngelisFPP15}.
\end{enumerate}

\noindent
We answer questions \ref{rq:1} and \ref{rq:2} in the affirmative.  The answer to
question \ref{rq:3} is less straightforward.  The way the implicative
coinductive lemmata are used in proofs alongside all other Horn clauses
in~\cite{FuKSP16} indeed resembles a program transformation method
when considered from the logic programming point of view.
In reality, however, different fragments of the
calculus given in~\cite{FuKSP16} allow proofs for Horn formulae which, when added
to the initial program, may lead to inductively or coinductively  unsound
extensions.  We analyse this situation carefully, throughout the
technical sections that follow.  In this way, we highlight which program transformation methods can
be soundly borrowed from existing work on corecursive resolution.
We will use the formulation of corecursive type class resolution
given by Fu \etal~\cite{FuKSP16}.  This extends Howard's simply-typed lambda
calculus~\cite{Howard80,FuK16} with
a resolution rule and a $\nu$-rule.  The resulting
calculus is general and  accounts for all previously suggested kinds of type
class resolution.


\knote{
    A question to Franta, only for times when all other urgent issues are
    resolved (may be even for after submission time): it looks somewhat
    plausible that corecursive type class resolution is not just sound, but also
    complete relative to greatest Herbrand models. Can we check this at some
    point?  It would be cool to have that, as coinductive completeness rarely
    hods...
}

\section{Preliminaries}
\label{sec:prel}

This section describes our notation and defines the models that we will use in the rest of
the paper.
As is  standard, a first-order signature $\Sigma$  consists of the set
$\mathcal{F}$ of function symbols and the set $\mathcal{P}$ of predicate
symbols, all of which possess an \emph{arity}.
Constants are function
symbols of arity $0$.  We also assume a countable set $\mathcal{V}$ of
variables. 
Given $\Sigma$ and $\mathcal{V}$, we have the following standard definitions:

\begin{definition}[Syntax of Horn formuale and logic programs]\label{df:syntax}
    \begin{alignat*}{3}
        \text{First-order term} && \quad \text{Term} & ::= \mathcal{V} \mid \mathcal{F}(Term, \dots, Term)\\
        \text{Atomic formula} && \text{At} & ::= \mathcal{P}(Term, \dots, Term)\\
        \text{Horn formula (clause)} && \text{CH} & ::= \text{At}, \dots,
        \text{At} \Rightarrow \text{At}\\
        \text{Logic program} && \text{Prog} & ::=  \text{CH}, \ldots , \text{CH}
    \end{alignat*}
\end{definition}

\noindent
We use identifiers $t$ and $u$ to denote terms and 
$A, B, C$ to
denote  atomic formulae.  We use $P$ with indicies to refer to elements of \emph{Prog}.
We say that a term or an atomic formula is \emph{ground} if it contains no variables.
We assume that all variables in Horn formulae 
are implicitly universally quantified. Moreover, restriction
\ref{r:noexist} from Section~\ref{sec:intro} requires that there are no
\emph{existential variables}, \ie given a clause $B_1, \ldots , B_n \Rightarrow
A$,  if a variable occurs in $B_i$, then it also occurs in $A$. 
We use the common term \emph{formula}  to refer to both atomic formulae and to Horn
formulae. A \emph{substitution} and the \emph{application} of a substitution to
a term or a formula are defined in the usual way.  We denote application of a
substitution $\sigma$ to a term $t$ or to an atomic formula $A$ by $\sigma t$
and $\sigma A$ respectively. We denote composition of substitutions $\sigma$ and
$\tau$ by $\sigma \circ \tau$.  A substitution $\sigma$ is a \emph{grounding}
substitution for a term $t$ if $\sigma t$ is a ground term, and similarly for an
atomic formula.

\subsection{Models of Logic Programs}
\label{sec:mod}

Throughout this paper, we use the standard definitions of the least and greatest
Herbrand models.
Given a signature $\Sigma$, the \emph{Herbrand universe} $\US$ is
the set of all ground terms over $\Sigma$.  
Given a Herbrand universe $\US$ we define the \emph{Herbrand base} $\BS$  as the 
set of all atoms consisting only of ground terms in $\US$.

\begin{definition}[Semantic operator] 
    \label{def:semop}
    Let $P$ be a logic program over signature $\Sigma$. The mapping $\TP : 2^{\BS}
    \to 2^{\BS}$ is defined as follows. Let $I$ be a subset of $\BS$.
    \begin{align*}
        \TP(I) = \{ A \in \BS \mid &~ B_1, \dots B_n \Rightarrow A \text{~is a ground
            instance of a clause in $P$,}\\
            & ~\text{and} ~ \{B_1, \dots, B_n\} \subseteq I \}
    \end{align*}
\end{definition}
The 
operator gives inductive and coinductive interpretation to a logic
program.
\begin{definition}
    Let $P$ be a logic program.

    \begin{itemize}
        \item The \emph{least Herbrand model} is the least set $\MP \in \BS$
            such that $\MP$ is a fixed point of $\TP$.
        \item The \emph{greatest Herbrand model} is the greatest set $\McP \in
            \BS$ such that $\McP$ is a fixed point of $\TP$.
    \end{itemize}
\end{definition}
Lloyd~\cite{Lloyd87} introduces the operators
$\downarrow$ and $\uparrow$  and proves that
$\TP \downarrow \omega$ gives the greatest Herbrand model of $P$, 
and that $\TP \uparrow \omega$ gives the least Herbrand model of $P$.
We will use these constructions in our own proofs.
The validity of a formula in a model is defined as usual. An atomic formula is
\emph{valid} in a model $I$ if and only if for any
grounding substitution $\sigma$, we have $\sigma F \in I$. A Horn formula $B_1, \dots, B_n
\Rightarrow A$ is valid in $I$ if for any substitution $\sigma$, if $\sigma B_1,
\dots, \sigma B_n$ are valid in $I$ then $\sigma A$ is valid in $I$.
We use the notation $P \vDash_{ind} F$ to denote that a formula $F$ is valid in
$\MP$ and $P \vDash_{coind} F$ to denote that a formula $F$ is valid in
$\McP$. 

\begin{restatable}{lemma}{lemsemprop}
    \label{lem:sem_prop}
    Let $P$ be a logic program and let $\sigma$ be a substitution.  The following
    holds:
    \begin{enumerate}[label=\alph*)]
        \item If $(~\Rightarrow A) \in P$ then both
                $P \vDash_{ind} \sigma A$ and
                $P \vDash_{coind} \sigma A$
        \item If, for all $i$, $P \vDash_{ind} \sigma B_i$
            and $(B_1, \dots, B_n \Rightarrow A) \in P$
            then $P~\vDash_{ind}~\sigma A$
        \item If, for all $i$, $P \vDash_{coind} \sigma B_i$
            and $(B_1, \dots, B_n \Rightarrow A) \in P$
            then $P \vDash_{coind} \sigma A$
    \end{enumerate}

\end{restatable}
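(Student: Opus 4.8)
The plan is to prove the three parts directly from the definitions of the semantic operator $\TP$ and the characterisations of the least and greatest Herbrand models as, respectively, $\TP \uparrow \omega$ and $\TP \downarrow \omega$ (the Lloyd constructions cited just above the lemma). In each case the key observation is that validity of a Horn or atomic formula is defined via \emph{all} grounding substitutions, so it suffices to reason about an arbitrary ground instance and then quantify. Throughout I would fix a grounding substitution $\tau$ for the relevant formula and work with $\tau\sigma A$, $\tau\sigma B_i$, which are ground atoms, and note that $B_1,\dots,B_n \Rightarrow A \in P$ implies $\tau\sigma B_1,\dots,\tau\sigma B_n \Rightarrow \tau\sigma A$ is a ground instance of a clause of $P$, hence is exactly the kind of clause instance referenced in Definition~\ref{def:semop}.

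For part (a): if $(\Rightarrow A) \in P$, then for any grounding substitution $\tau\sigma$ the atom $\tau\sigma A$ is a ground instance of a clause with empty body, so $\tau\sigma A \in \TP(I)$ for \emph{every} $I \subseteq \BS$. In particular $\tau\sigma A$ lies in every fixed point of $\TP$, hence in both $\MP$ and $\McP$; since $\tau$ was arbitrary, $\sigma A$ is valid in both, giving $P \vDash_{ind} \sigma A$ and $P \vDash_{coind} \sigma A$. For part (c): assume $P \vDash_{coind} \sigma B_i$ for all $i$ and $(B_1,\dots,B_n \Rightarrow A) \in P$. Fix a grounding substitution $\tau$ for $\sigma A$; since the clause has no existential variables (restriction \ref{r:noexist}), every variable of each $B_i$ occurs in $A$, so $\tau$ is also grounding for each $\sigma B_i$ and $\tau\sigma B_i$ is ground. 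By validity of $\sigma B_i$ in $\McP$, each $\tau\sigma B_i \in \McP$. Then $\tau\sigma B_1,\dots,\tau\sigma B_n \Rightarrow \tau\sigma A$ is a ground instance of a program clause with all body atoms in $\McP$, so $\tau\sigma A \in \TP(\McP) = \McP$ (using that $\McP$ is a fixed point). As $\tau$ ranges over all groundings, $\sigma A$ is valid in $\McP$, i.e. $P \vDash_{coind} \sigma A$. Part (b) is identical with $\MP$ in place of $\McP$, again using that $\MP$ is a fixed point of $\TP$.

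The only real subtlety — and the step I would flag as the main obstacle, modest as it is — is the appeal to restriction \ref{r:noexist} in parts (b) and (c): without it, a grounding substitution for $\sigma A$ need not ground the body atoms $\sigma B_i$, and the phrase ``$P \vDash_{ind} \sigma B_i$'' would not directly hand us a ground atom in the model to feed into $\TP$. Because the paper has globally assumed the no-existential-variables condition on all Horn clauses, this is exactly where that hypothesis earns its keep, and it should be mentioned explicitly. Everything else is a routine unwinding of the fixed-point property of $\TP$ and the definition of validity; no transfinite induction or use of the $\uparrow$/$\downarrow$ approximants is actually needed here, since all three statements only require that $\MP$ and $\McP$ \emph{are} fixed points, not how they are built.
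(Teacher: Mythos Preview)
Your proposal is correct and follows essentially the same approach as the paper: both arguments reduce each part to the fixed-point equation $\TP(\MP)=\MP$ (resp.\ $\TP(\McP)=\McP$), feed in a ground instance of the relevant program clause, and observe that no use is made of minimality or maximality. Your explicit invocation of restriction~\ref{r:noexist} to guarantee that a grounding substitution for $\sigma A$ also grounds each $\sigma B_i$ is a detail the paper's proof leaves implicit, so if anything your write-up is slightly more careful on that point.
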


\noindent The proof of this lemma and of the other results of this paper can be found in Appendix~\ref{sec:appendix}. 

\begin{fullproof}{proofsemprop}
    a) Let $P$ be a logic program such that $(~\Rightarrow A) \in P$.  By
    Definition \ref{def:semop} of the semantic operator, for any grounding
    substitution $\tau$, $\tau A \in \TP(\MP)$.  Since $\MP$ is a fixed point of
    $\TP$ also $\tau A \in \MP$ and by definition of validity of a formula, $P
    \vDash_{ind} A$ and also, for any substitution $\sigma$, $P \vDash_{ind}
    \sigma A$. Since we do not use the fact that $\MP$ is the least fixed point the
    proof for coinductive case is identical.

    b) Let be $P$, $A$, $B_1$, \dots, $B_n$ as above. Assume, for all $i$,
    $P \vDash_{ind} B_i$ whence, for all $i$, for any grounding substitution $\sigma$,
    $\sigma B_i \in \MP$. By Definition \ref{def:semop} of semantic
    operator, $\sigma A \in \TP(\MP)$. Since
    $\MP$ is a fixed point also $\sigma A \in
    \MP$ and $P \vDash_{ind} \sigma A$.

    c) Note that the proof of b) does not make any use of the fact that
    $\MP$ is the least fixed point. Therefore use the proofs of 
    b) {\it mutatis mutandis}. 
    \qed
\end{fullproof}

\subsection{Proof Relevant Resolution}
\label{sec:res}

In~\cite{FuKSP16}, the usual syntax of Horn  formulae was embedded into a
type-theoretic framework, with Horn formulae seen as types inhabited by
proof terms.  In this setting, a judgement has the form $\Phi \vdash e: F$,
where $e$ is a proof term inhabiting formula $F$, and $\Phi$ is an \emph{axiom
environment} containing annotated Horn formulae that correspond to the given logic
program.
This gives rise to the following syntax, in addition to
that of Definition~\ref{df:syntax}.  We assume a set of proof term symbols $K$, and  a
set of proof term variables $U$.

\begin{definition}[Syntax of proof terms and axiom environments]
    \begin{alignat*}{3}
        \text{Proof term} && \quad E & ::= K \mid U \mid  E~E
            \mid \lambda U . E \mid \nu U . E\\
            \text{Axiom environment} && \quad Ax & ::= \cdot \mid Ax,(E : CH)
    \end{alignat*}

\end{definition}

\noindent We use the notation $\kappa$ with indices to refer to elements of $K$,
$\alpha$ and $\beta$ with indices to refer to elements of $U$, $e$ to refer to
proof terms in $E$, and $\Phi$ to refer to axiom environments in \emph{Ax}.
Given a judgement  $\Phi \vdash e: F$, we call $F$ an \emph{axiom} if $e \in
K$, and we call $F$ a \emph{lemma} if $e \notin K$ is a closed term, \ie
it contains no free variables.  A proof term $e$ is in
\emph{head normal form} (denoted $\HNF(e)$), if $e = \lambda \underline{\alpha}.\kappa~\underline{e}$ where
$\underline{\alpha}$ and $\underline{e}$ denote (possibly empty) sequences of
variables $\alpha_1, \dots, \alpha_n$ and proof terms $e_1 \dots e_m$
respectively where $n$ and $m$ are known from the context or are unimportant.
The intention of the above definition is to interpret logic programs, seen as
sets of Horn formulae, as types.  Example~\ref{ex:pair} shows how
the proof term symbols $\kappa_1$ and $\kappa_2$ can be used to annotate 
clauses in the given logic program. 
We capture this intuition in the following formal definition:

\begin{definition}
  \label{def:axenv}
  Given a logic program $P_{A}$ consisting of Horn clauses $H_1, \ldots, H_n$, with
  each $H_i$ having the shape $B_1^i, \ldots , B_k^i \Rightarrow A^i$, the axiom
  environment $\Phi_A$ is defined as follows.  We assume proof term symbols
  $\kappa_1, \ldots , \kappa_n$, and define, for each $H_i$, $\kappa_i : B_1^i,
  \ldots , B_k^i \Rightarrow A^i$.
\end{definition}

\noindent Revisiting Example~\ref{ex:pair}, we can say that it shows the result
of translation of the program $P_{Pair}$ into $\Phi_{Pair}$ and $\Phi_{Pair}$
is an axiom environment for the logic program $P_{Pair}$. In general, we say
that $\Phi_A$ is an axiom environment for a logic program $P_A$ if and only if there 
is a translation of $P_A$ into $\Phi_A$. We drop the index $A$ where 
it is known or unimportant.
Restriction \ref{r:nonoverlap} from Section~\ref{sec:intro} requires that
axioms in an axiom environment do not overlap. 
However, a lemma may overlap with other
axioms and lemmata---only axioms are subject to restriction
\ref{r:nonoverlap}.
We refer the reader to~\cite{FuKSP16} for complete exposition of proof-relevant resolution.
In the following sections, we will use this syntax to gradually introduce inference rules for proof-relevant corecursive resolution. 
We start with its ``inductive'' fragment, \ie the fragment that is sound relative to the least Herbrand models, and then in subsequent sections consider
its two coinductive extensions (which are both sound with respect to the greatest Herbrand models).

\section{Inductive Soundness of Type Class Resolution}
\label{sec:ind_sound}

This section describes the inductive fragment of the calculus for the
extended type class resolution that was introduced by Fu \etal~\cite{FuKSP16}.  We
reconstruct the standard theorem of universal inductive soundness for the
resolution rule.
We consider an extended version of type class resolution,  working with queries given by Horn formulae, rather than just atomic formulae.
We show that the resulting proof system is inductively sound, but coinductively unsound; we also show that it is incomplete.
Based on these results, we discuss the program transformation methods that can arise.


\begin{definition}[Type class resolution]
\begin{align}
    \inference[if $(e : B_1,\dots, B_n \Rightarrow A) \in \Phi$]
    {
        \Phi \vdash e_1 : \sigma B_1 \quad \cdots 
        \quad \Phi \vdash e_n : \sigma B_n 
    } 
    {
        \Phi
        \vdash e~e_1 \cdots e_n : \sigma A
    } \tag{\sc Lp-m} \label{ir:lp-m}
\end{align}
\end{definition}
If, for a given atomic formula $A$, and a given environment $\Phi$,  $\Phi
\vdash e : A$ is derived using the  \ref{ir:lp-m} rule we say that $A$ is entailed
by $\Phi$  and that the proof term~$e$ witnesses this entailment.  We define
derivations and derivation trees resulting from applications of the above rule
in the standard way (\cf Fu \etal~\cite{FuKSP16}).

\begin{example}
    Recall the logic program $P_{Pair}$ in Example~\ref{ex:pair}.
    The inference steps for 
    $\mathtt{eq}(\mathtt{pair}(\mathtt{int},~\mathtt{int}))$
    correspond to the following derivation tree:
\begin{prooftree}
    \footnotesize
    \AxiomC{}
    \UnaryInfC{$
        \Phi_{Pair}
        \vdash \kappa_2 : \mathtt{eq}(\mathtt{int})
    $}
    \AxiomC{}
    \UnaryInfC{$
        \Phi_{Pair}
        \vdash \kappa_2 : \mathtt{eq}(\mathtt{int})
    $}
    \BinaryInfC{$
        \Phi_{Pair}
        \vdash
        \kappa_1 \kappa_2 \kappa_2 :
        \mathtt{eq}(\mathtt{pair}(\mathtt{int},\mathtt{int}))
    $}
\end{prooftree}
\end{example}
The above entailment is inductively sound, \ie it is sound with respect to the
least Herbrand model of $P_{Pair}$:

\begin{restatable}{theorem}{thmindsound}
    \label{thm:ind_sound}
    Let $\Phi$ be an axiom
    environment for a logic program $P$, and let $\Phi \vdash e : A$ hold.
    Then $P \vDash_{ind} A$.
\end{restatable}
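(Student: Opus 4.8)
The plan is to proceed by induction on the structure of the derivation of $\Phi \vdash e : A$ using the rule~\eqref{ir:lp-m}. Since \eqref{ir:lp-m} is the only inference rule in this fragment of the calculus, every derivation of $\Phi \vdash e : A$ ends with an application of \eqref{ir:lp-m}: there is a clause $(e' : B_1, \dots, B_n \Rightarrow C) \in \Phi$ and a substitution $\sigma$ such that $A = \sigma C$, $e = e'\, e_1 \cdots e_n$, and for each $i \in \{1, \dots, n\}$ we have a strictly smaller subderivation $\Phi \vdash e_i : \sigma B_i$. The case $n = 0$ is the base case: then $(e' :\ \Rightarrow C) \in \Phi$, so by Definition~\ref{def:axenv} the clause $(\Rightarrow C)$ belongs to $P$, and Lemma~\ref{lem:sem_prop}(a) gives $P \vDash_{ind} \sigma C$, i.e. $P \vDash_{ind} A$.

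For the inductive step ($n > 0$), I would apply the induction hypothesis to each of the $n$ subderivations to obtain $P \vDash_{ind} \sigma B_i$ for every $i$. Since $(e' : B_1, \dots, B_n \Rightarrow C)$ is in $\Phi$, the corresponding Horn clause $(B_1, \dots, B_n \Rightarrow C)$ is in $P$ by the definition of an axiom environment (Definition~\ref{def:axenv}). Now Lemma~\ref{lem:sem_prop}(b), applied with this clause and the substitution $\sigma$, yields $P \vDash_{ind} \sigma C$, that is, $P \vDash_{ind} A$. This closes the induction.

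The argument is essentially routine once the induction is set up correctly, so I do not expect a serious obstacle; the one point that needs a little care is the bookkeeping around substitutions and universal quantification. The validity notion for atomic formulae is "valid for all grounding substitutions", so when I invoke the induction hypothesis I must make sure the statement being proved is the universal one ($P \vDash_{ind} \sigma B_i$ in full, not merely for one grounding of it); fortunately the statement of the theorem and of Lemma~\ref{lem:sem_prop} are already phrased this way, so the induction goes through uniformly. A secondary subtlety is that the $\sigma$ in the \eqref{ir:lp-m} rule need not be grounding and the clause's variables are implicitly universally quantified; but since Lemma~\ref{lem:sem_prop}(b) is stated for an arbitrary substitution $\sigma$ applied to a clause of $P$, it absorbs exactly this generality, and no extra composition-of-substitutions reasoning is required beyond what is already packaged there. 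Note that the proof uses only parts (a) and (b) of Lemma~\ref{lem:sem_prop}; the same argument with parts (a) and (c) would establish the analogous coinductive soundness statement, which is why the later sections can reuse this template.
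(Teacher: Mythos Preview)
Your proposal is correct and follows essentially the same approach as the paper's proof: structural induction on the derivation, with the base case $n=0$ handled by Definition~\ref{def:axenv} and Lemma~\ref{lem:sem_prop}(a), and the inductive step by the induction hypothesis together with Definition~\ref{def:axenv} and Lemma~\ref{lem:sem_prop}(b). Your additional remarks about substitution bookkeeping and the coinductive analogue via parts (a) and (c) are accurate and align with how the paper later reuses this template.
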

\begin{shortproof}
    By structural induction on the derivation tree and construction of the least
    Herbrand model, using Lemma~\ref{lem:sem_prop}.
    \qed
\end{shortproof}
\begin{fullproof}{proofindsound}
    By structural induction on the derivation tree.

    \emph{Base case}: Let the derivation be $\inference[]{}{\Phi \vdash \kappa :
    A}$ for an atomic formula $A'$, a proof term symbol $\kappa$,
    and a substitution $\sigma$ such that $A = \sigma A'$. By definition of the
    rule \ref{ir:lp-m} there is a clause  $(\kappa : ~\Rightarrow A') \in \Phi$
    and from the Definition \ref{def:axenv} of axiom environment also $~ \Rightarrow
    A' \in P$. From the Lemma~\ref{lem:sem_prop} part a) follows that $P
    \vDash_{ind} \sigma A'$.

    \emph{Inductive case}: Let the last step in the derivation tree be\\
    $\inference[]{\Phi \vdash e_1 : \sigma B_1 \dots \Phi \vdash e_n : \sigma
    B_n}{\Phi \vdash \kappa e_1 \dots e_n : \sigma A'}$ for atomic formulae
    $A'$, $B_1$, \dots, $B_n$, a proof term symbol $\kappa$, a substitution
    $\sigma$ and proof witnesses $e_1$, \dots, $e_n$ such that $A = \sigma A'$.
    From the definition of the rule
    \ref{ir:lp-m} there is a clause $(\kappa : B_1, \dots B_n \Rightarrow A') \in
    \Phi$ and from the Definition \ref{def:axenv} of axiom environment also $B_1,
    \dots, B_n \Rightarrow A' \in P$. 
    From the induction assumption, for $i \in \{1, \dots, n\}$, $P \vDash_{ind}
    \sigma B_i$ and by the Lemma~\ref{lem:sem_prop} part b), $P \vDash_{ind}
    \sigma A'$.
    \qed
\end{fullproof}

\noindent
The rule \ref{ir:lp-m} also plays a crucial role in the coinductive fragment of
type class resolution, as will be discussed in Sections \ref{sec:coind_sound}
and \ref{sec:ext_coind_sound}. 
We now discuss the other rule that is present in~the work of Fu
\etal~\cite{FuKSP16}, i.e the rule that allows us to prove Horn formulae: 

\begin{definition}
\begin{gather*}
    \inference[]
    {
        \Phi, (\beta_1 : ~\Rightarrow B_1), \dots, (\beta_n : ~\Rightarrow B_n)
            \vdash  e : A 
    }
    {
        \Phi \vdash \lambda \beta_1,\dots,\beta_n . e : B_1,\dots,B_n
        \Rightarrow A 
    } \tag{\sc Lam} \label{ir:lam}
\end{gather*}
\end{definition}

\begin{example} 
\noindent To illustrate the use of the \ref{ir:lam} rule,
consider the following program:
Let $P$ consist of two clauses: $A \Rightarrow
B$ and $B \Rightarrow C$. Both the least and the greatest Herbrand model of $P$
are empty. Equally, no 
formulae can be derived from the
corresponding axiom environment
by the \ref{ir:lp-m} rule.  However, we can derive $A \Rightarrow C$
by using a combination of the \ref{ir:lam} and \ref{ir:lp-m} rules.
    Let $\Phi = (\kappa_1 : A \Rightarrow B), (\kappa_2 : B \Rightarrow C)$. The
    following is then a derivation tree for a formula $A \Rightarrow C$:

    \begin{prooftree}
        \footnotesize
        \AxiomC{}
        \UnaryInfC{$
            \Phi, (\alpha : ~\Rightarrow A) \vdash \alpha : A
        $}
        \UnaryInfC{$
            \Phi, (\alpha : ~\Rightarrow A) \vdash \kappa_1 \alpha : B
        $}
        \UnaryInfC{$
            \Phi, (\alpha : ~\Rightarrow A) \vdash \kappa_2 (\kappa_1 \alpha) : C
        $}
        \RightLabel{\sc Lam}
        \UnaryInfC{$
            \Phi \vdash \lambda \alpha . \kappa_2 (\kappa_1 \alpha) 
                : A \Rightarrow C
        $}
    \end{prooftree}
    When there is no label on the right-hand side of an inference step, inference
    uses the  \ref{ir:lp-m} rule. We follow this convention throughout the
    paper.
\end{example}

We can show that the calculus comprising the rules  \ref{ir:lp-m} and
\ref{ir:lam} is again (universally) inductively sound.

\begin{restatable}{lemma}{lemsteplamind}
    \label{lem:step_lam_ind}
    Let $P$ be a logic program and let $A$, $B_1$, \dots, $B_n$ be atomic formulae.
    If $P, (~\Rightarrow B_1), \dots, (~\Rightarrow B_n)
    \vDash_{ind} A$ then $P \vDash_{ind} B_1, \dots, B_n \Rightarrow A$.
\end{restatable}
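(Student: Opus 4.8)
The plan is a purely semantic argument based on the fixed-point characterisation $\MP=\TP\uparrow\omega$ recalled above. Assume $P,(\Rightarrow B_1),\dots,(\Rightarrow B_n)\vDash_{ind}A$; to establish $P\vDash_{ind}B_1,\dots,B_n\Rightarrow A$, fix an arbitrary substitution $\sigma$ with $P\vDash_{ind}\sigma B_i$ for every $i$, and prove $P\vDash_{ind}\sigma A$. Introduce the intermediate program $P_\sigma:=P,(\Rightarrow\sigma B_1),\dots,(\Rightarrow\sigma B_n)$, obtained by adjoining the $\sigma$-instances of the $B_i$ as facts; the argument then splits into two parts.

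First, $\mathcal{M}_{P_\sigma}\subseteq\MP$. For this I would check that $\MP$ is itself a fixed point of $\mathcal{T}_{P_\sigma}$: for every interpretation $I$ one has $\mathcal{T}_{P_\sigma}(I)=\TP(I)\cup S$, where $S$ is the set of all ground instances of $\sigma B_1,\dots,\sigma B_n$, and the hypothesis $P\vDash_{ind}\sigma B_i$ says precisely $S\subseteq\MP$, so $\mathcal{T}_{P_\sigma}(\MP)=\TP(\MP)\cup S=\MP$. Since $\mathcal{M}_{P_\sigma}$ is the \emph{least} such fixed point, $\mathcal{M}_{P_\sigma}\subseteq\MP$, and it therefore suffices to prove that $\sigma A$ is valid in $\mathcal{M}_{P_\sigma}$.

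Second, $\sigma A$ is valid in $\mathcal{M}_{P_\sigma}$. Fix a grounding substitution $\tau$ for $\sigma A$; the goal is $\tau\sigma A\in\mathcal{M}_{P_\sigma}$. By restriction~\ref{r:noexist} the formula $B_1,\dots,B_n\Rightarrow A$ has no existential variables, so every variable of each $B_i$ also occurs in $A$; hence $\tau\sigma$ simultaneously grounds $A$ and every $B_i$, and each $\tau\sigma B_i$ is a ground instance of $\sigma B_i$, whence $\tau\sigma B_i\in\MP$ and also $\tau\sigma B_i\in\mathcal{M}_{P_\sigma}$. Applying the hypothesis to the grounding substitution $\tau\sigma$ of $A$ gives $\tau\sigma A\in\mathcal{M}_{P'}$, where $P'=P,(\Rightarrow B_1),\dots,(\Rightarrow B_n)$. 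It then remains to transport this membership from $\mathcal{M}_{P'}$ down to $\mathcal{M}_{P_\sigma}$, i.e. to show that the derivation witnessing $\tau\sigma A\in\mathcal{M}_{P'}$ never appeals to a ground instance of some $B_i$ other than $\tau\sigma B_i$. This is the step that uses the distinguishing restrictions of type class resolution: because derivations proceed by term matching (restriction~\ref{r:match}) the query is never instantiated, and because the clauses of $P$ contain no existential variables the fixed substitution $\tau\sigma$ propagates uniformly through the whole derivation, forcing every use of a new fact $\Rightarrow B_i$ to be a use of $\tau\sigma B_i$.

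I expect this second part to be the main obstacle and the only step requiring genuine care: for unrestricted Horn clauses the transport fails, since a derivation in $\mathcal{M}_{P'}$ can legitimately invoke ground instances of the $B_i$ that are not $\tau\sigma$-instances and not present in $\MP$. I would make it precise by induction on the derivation tree witnessing $\tau\sigma A\in\mathcal{M}_{P'}$, tracking how $\tau\sigma$ is inherited by the subgoals. The remaining ingredients — the fixed-point computation of the first part, composition of substitutions, and the $\TP$-closedness of $\MP$ — are routine and appeal only to the definitions of Section~\ref{sec:mod} and Lemma~\ref{lem:sem_prop}.
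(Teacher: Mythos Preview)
You correctly isolate the transport step as the crux, and your instinct that it fails for unrestricted Horn clauses is right. However, the proposed repair via restriction~\ref{r:match} is a category error: that restriction constrains the proof system $\vdash$, whereas the lemma is purely about the semantic operator $\TP$, whose iterations use arbitrary ground instances of clauses rather than term matching. There is no ``derivation by term matching'' witnessing $\tau\sigma A\in\mathcal{M}_{P'}$ to which restriction~\ref{r:match} could apply. Nor does the no-existential-variables condition on $B_1,\dots,B_n\Rightarrow A$ save you. Take $P=\{\,\Rightarrow q(a),\ q(f(x))\Rightarrow p(x)\,\}$ (which satisfies~\ref{r:nonoverlap} and~\ref{r:noexist}), $B_1=q(x)$, $A=p(x)$. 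Then $P'\vDash_{ind}p(x)$, since every ground $q(f(t))$ is an instance of the new fact $\Rightarrow q(x)$ and hence every $p(t)\in\mathcal{M}_{P'}$. But $\MP=\{q(a)\}$, so with $\sigma=\{x\mapsto a\}$ we have $P\vDash_{ind}q(a)$ yet $P\not\vDash_{ind}p(a)$; thus $P\not\vDash_{ind}q(x)\Rightarrow p(x)$. The semantic ``derivation'' of $p(a)$ in $\mathcal{M}_{P'}$ goes through $q(f(a))$, not through $\tau\sigma B_1=q(a)$, so your uniform-propagation claim fails even though every variable of $B_1$ occurs in $A$.

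The paper's own proof is a terse stage-counting argument that asserts $(\tau\circ\sigma)A\in\TP\uparrow(n+m)$ whenever $(\tau\circ\sigma)A\in\mathcal{T}_{P'}\uparrow n$ and each $(\tau\circ\sigma)B_i\in\TP\uparrow m$; the same counterexample defeats it (there $n=2$, $m=1$, but $p(a)\notin\TP\uparrow k$ for any $k$). So the gap you flag is genuine and the lemma as stated appears to be false. A correct version would need to constrain the $B_i$ further --- for instance to ground atoms, in which case your Part~1 alone already suffices, since then $P_\sigma=P'$ and no transport is required.
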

\begin{shortproof}
    By induction on construction of $\MP$. \qed
\end{shortproof}
\begin{fullproof}{proofsteplamind}
    Assume that $P, (~\Rightarrow B_1), \dots, (~\Rightarrow B_n)
    \vDash_{ind} A$. From the Definition \ref{def:semop} there is the least $n$
    such that for any grounding substitution $\tau$, $(\tau \circ \sigma) A \in
    \\
    \TP{}_{,(~\Rightarrow B_1), \dots, (~\Rightarrow B_n)}~\uparrow~n$. Consider
    any substitution $\sigma$ and suppose that for all $i$, $P \vDash \sigma
    B_i$.  From the definition of validity for any grounding $\tau$ for all
    $i$,
    $(\tau \circ \sigma) B_i \in \MP$ hence there is the least $m$ such that $(\tau
    \circ \sigma) B_i \in \TP \uparrow m$. From the assumption, for any grounding 
    substitution $\tau$ also $(\tau \circ \sigma) A \in \TP \uparrow (n + m)$
    and $P \vDash \sigma A$. Hence $P \vDash_{ind} B_1, \dots, B_n \Rightarrow
    A$.  \qed
\end{fullproof}
\vskip-1em
\begin{restatable}{theorem}{thmlpmlam}
    \label{thm:lp-m_lam}
  Let $\Phi$ be an axiom environment for a logic program $P$ and $F$ a formula.  
  Let $\Phi \vdash e : F$ be by the \ref{ir:lp-m} and \ref{ir:lam} rules.
  Then $P \vDash_{ind} F$.
\end{restatable}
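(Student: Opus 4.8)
The plan is to proceed by structural induction on the derivation tree of $\Phi \vdash e : F$, combining the argument of Theorem~\ref{thm:ind_sound} with Lemma~\ref{lem:step_lam_ind}. The statement now allows $F$ to be an arbitrary Horn formula rather than just an atomic formula, so the derivation tree is built from applications of both \ref{ir:lp-m} and \ref{ir:lam}; hence the induction must be formulated carefully to handle the extended axiom environments that \ref{ir:lam} introduces. The subtlety is that the hypothesis ``$\Phi$ is an axiom environment for $P$'' is no longer preserved by the \ref{ir:lam} rule, because \ref{ir:lam} pushes fresh assumptions $(\beta_i : {}\Rightarrow B_i)$ into the environment; correspondingly, on the semantic side, we must pass to the program $P, ({}\Rightarrow B_1), \dots, ({}\Rightarrow B_n)$. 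So I would prove the following slightly generalised statement by structural induction: for every axiom environment $\Phi$ that is an axiom environment for a logic program $P$, if $\Phi \vdash e : F$ is derived using \ref{ir:lp-m} and \ref{ir:lam}, then $P \vDash_{ind} F$.

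First I would dispatch the \ref{ir:lp-m} case. If $F$ is atomic, the last rule is \ref{ir:lp-m}, so there is a clause $(e' : B_1, \dots, B_n \Rightarrow A) \in \Phi$, a substitution $\sigma$ with $F = \sigma A$, and subderivations $\Phi \vdash e_i : \sigma B_i$. By Definition~\ref{def:axenv} the corresponding clause $B_1, \dots, B_n \Rightarrow A$ lies in $P$; by the induction hypothesis $P \vDash_{ind} \sigma B_i$ for each $i$; and then Lemma~\ref{lem:sem_prop}(b) (together with part (a) when $n = 0$) gives $P \vDash_{ind} \sigma A$. This step is essentially the proof of Theorem~\ref{thm:ind_sound} reused verbatim, now phrased as the induction step of the combined argument.

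Next I would handle the \ref{ir:lam} case. Here $F = B_1, \dots, B_n \Rightarrow A$, the proof term is $\lambda \beta_1, \dots, \beta_n . e'$, and the premise is $\Phi, (\beta_1 : {}\Rightarrow B_1), \dots, (\beta_n : {}\Rightarrow B_n) \vdash e' : A$. The extended environment is an axiom environment for the program $P' = P, ({}\Rightarrow B_1), \dots, ({}\Rightarrow B_n)$, so the induction hypothesis yields $P' \vDash_{ind} A$. Lemma~\ref{lem:step_lam_ind} then converts this into $P \vDash_{ind} B_1, \dots, B_n \Rightarrow A$, which is exactly $P \vDash_{ind} F$. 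The main obstacle, and the only place that requires genuine care, is making sure the induction is set up over the right class of environments so that the \ref{ir:lam} case type-checks: the theorem as stated only speaks of environments coming from programs, but interior nodes of the derivation tree live under enlarged environments, so the generalised induction hypothesis (quantified over all $(P, \Phi)$ pairs with $\Phi$ an axiom environment for $P$) is what makes the recursion go through. Everything else is a routine reuse of Lemmas~\ref{lem:sem_prop} and~\ref{lem:step_lam_ind}.
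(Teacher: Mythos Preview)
Your proposal is correct and follows essentially the same approach as the paper: structural induction on the derivation, handling the \ref{ir:lp-m} case via Lemma~\ref{lem:sem_prop} and the \ref{ir:lam} case via Lemma~\ref{lem:step_lam_ind}. You are in fact more careful than the paper on one point: you explicitly generalise the induction hypothesis to quantify over all pairs $(P,\Phi)$ with $\Phi$ an axiom environment for $P$, so that the \ref{ir:lam} case (which enlarges both $\Phi$ and, correspondingly, $P$) type-checks; the paper's proof leaves this implicit and simply asserts ``from the induction assumption, $P, ({}\Rightarrow B_1), \dots, ({}\Rightarrow B_n) \vDash_{ind} A$'' without spelling out the needed generalisation.
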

\begin{shortproof}
    By structural induction on the derivation tree using Lemmata
    \ref{lem:sem_prop} \& \ref{lem:step_lam_ind}.
    \qed
\end{shortproof}
\begin{fullproof}{prooflpmlam}
    By structural induction on the derivation tree.

    \emph{Base case}: Let the derivation be $\inference[]{}{\Phi \vdash \kappa :
    A}$ for an atomic formula $A'$, a constant symbol $\kappa$,
    and a substitution $\sigma$ such that $A = \sigma A'$. By definition of the
    rule \ref{ir:lp-m} there is a clause  $(\kappa : ~\Rightarrow A') \in \Phi$
    and from the Definition \ref{def:axenv} of axiom environment also $~ \Rightarrow
    A' \in P$. From the Lemma~\ref{lem:sem_prop} part a) follows that
    $P \vDash_{ind} \sigma A'$.

    \emph{Inductive case}: Let the last step in the derivation tree be by the
    rule \ref{ir:lp-m} thus of the form
    $\inference[]{\Phi \vdash e_1 : \sigma B_1 \dots \Phi \vdash e_n : \sigma
    B_n}{\Phi \vdash \kappa e_1 \dots e_n : \sigma A'}$ for atomic formulae 
    $A$, $B_1$, \dots, $B_n$, a proof term symbol $\kappa$, a substitution
    $\sigma$ and proof witnesses $e_1$, \dots, $e_n$.
    From the definition of the rule 
    \ref{ir:lp-m} there is a clause $(\kappa : B_1, \dots B_n \Rightarrow A) \in
    \Phi$ and from the Definition \ref{def:axenv} of axiom environment also $B_1,
    \dots, B_n \Rightarrow A \in P$. 
    From the induction assumption, for $i \in \{1, \dots, n\}$, $\Phi \vDash_{ind}
    \sigma B_i$ and by the Lemma~\ref{lem:sem_prop} part b), $P \vDash_{ind} 
    \sigma A'$.

    Let the last step of the derivation be by the rule \ref{ir:lam} thus of the
    form
    $\inference[]{\Phi, (\beta_1 : ~\Rightarrow B_1), \dots, (\beta_n :
    ~\Rightarrow B_n) \vdash e : A}{\Phi \vdash \lambda \beta_1, \dots, \beta_n
    . e : B_1, \dots, B_n \Rightarrow A}$ for atomic formulae $A$, $B_1$, \dots,
    $B_n$, proof term $e$, and variables $b_1$, \dots,
    $b_n$.
    From the induction assumption, $P, (~\Rightarrow B_1), \dots, ( ~\Rightarrow
    B_n) \vDash A$ and from the Lemma~\ref{lem:step_lam_ind} also $P \vDash_{ind}
    A$.
    \qed
\end{fullproof}
\vskip-3em~
\subsubsection{Inductive Completeness and Incompleteness of the Proof System  \ref{ir:lp-m} + \ref{ir:lam}.}
In principle, one can consider two
different variants. 
Extending the standard results of~\cite{Lloyd87}, our first formulation is:

\vspace{6pt}
\noindent
\textbf{Inductive Completeness-1:} 
\emph{if a ground atomic formula $F$
  is in $\MP$, then  $\Phi_P \vdash e:F$ is in the  \ref{ir:lp-m} + \ref{ir:lam} proof system.}

\vspace{6pt}
\noindent
  Such a result can be proved, as in~\cite{Lloyd87}, by
  straightforward induction on the construction of $\MP$. Such a proof will be based solely on the properties of the rule \ref{ir:lp-m} and on the properties of the semantic operator $\TP$ that is used to construct
    the least Herbrand
    models.

\vspace{12pt}
\noindent
An alternative formulation of the completeness result, this time involving implicative formulae and hence the rule  \ref{ir:lam} in the proof, would be:

\vspace{6pt}
\noindent
\textbf{Inductive Completeness-2:} 
\emph{if  $\MP \vDash_{ind} F$ then  $\Phi_P \vdash e:F$ is in the  \ref{ir:lp-m} + \ref{ir:lam} proof system.}
\vspace{6pt}

\noindent
However, this result would not hold for either system \ref{ir:lp-m} or   \ref{ir:lp-m} + \ref{ir:lam}. Consider the following examples.
\newknote{check the above please?}

\begin{example}
    \label{ex:incomp1}
    Let $\Sigma$ be a signature consisting of a unary predicate symbol $\mathtt{A}$, a
    unary function symbol $\mathtt{f}$ and a constant function symbol $\mathtt{g}$. 
    Let $P_6$ be a program given by the following axiom environment:
\begin{alignat*}{3}
    \kappa_1 :~ & && \Rightarrow \mathtt{A}(\mathtt{f}(x))\\
    \kappa_2 :~ & && \Rightarrow \mathtt{A}(\mathtt{g})
\end{alignat*}
\noindent
The least Herbrand model of $P_6$ is $\MP{_{{}_6}} = \{A(g), A(f(g)), A(f(f(g))), \dots
\}$. Therefore, $P \vDash_{ind} \mathtt{A}(x)$. However, neither
$\kappa_1$ nor $\kappa_2$ matches $\mathtt{A}(x)$ and there is thus no way
to construct a  proof term $e$ satisfying:
    \begin{prooftree}
\vskip-1.5em
        \footnotesize
        \AxiomC{$\cdots$}
        \RightLabel{\sc Lp-m}
        \UnaryInfC{$
            P \vdash
            e : \mathtt{A}(x)
        $}
    \end{prooftree}
\end{example}
\noindent
We demonstrate the incompleteness of the proof system \ref{ir:lp-m} + \ref{ir:lam} through
the following example:

\begin{example}
    \label{ex:incomp2}
    Let $\Sigma$ be a signature consisting of the unary predicate symbols
    $\mathtt{A}$ and $\mathtt{B}$, and a constant function symbol $\mathtt{f}$.
    Consider a program $P_7$ given by the following axiom environment:
    \vskip-2.2em
\begin{alignat*}{3}
    \kappa_1 :~ & && \Rightarrow \mathtt{A}(\mathtt{f})\\
    \kappa_2 :~ & && \Rightarrow \mathtt{B}(\mathtt{f})
\end{alignat*}
\noindent
The least Herbrand model is $\MP{_{{}_7}} = \{A(f), B(f) \}$. Therefore 
$P \vDash_{ind} \mathtt{B}(x) => \mathtt{A}(x)$. 
However, any proof of $\mathtt{B}(x) => \mathtt{A}(x)$
needs to show that:
    \begin{prooftree}
\vskip-1.5em
        \footnotesize
        \AxiomC{$\cdots$}
        \UnaryInfC{$
            (P, \alpha : ~\Rightarrow \mathtt{B}(x)) \vdash
            e : \mathtt{A}(x)$}
        \RightLabel{\sc Lam}
        \UnaryInfC{$
            P \vdash
            \lambda \alpha . e : \mathtt{B}(x) \Rightarrow \mathtt{A}(x)
        $}
    \end{prooftree}
\noindent
where $e$ is a proof term. This proof will not succeed since no axiom or hypothesis matches $\mathtt{A}(x)$.
\end{example}

\vskip-3em~
\subsubsection{Related Program Transformation Methods}
For Fu \etal~\cite{FuKSP16}, the main purpose of introducing the rule~\ref{ir:lam} 
was to increase expressivity of the proof system.
In particular, obtaining an entailment $\Phi \vdash e : H$ of a Horn
formula $H$ enabled 
the environment $\Phi$ to be extended with $e:H$, which could be used in 
future proofs.
We show that transforming (the standard, untyped) logic programs in this way is
inductively sound.
The following theorem follows from Lemma~\ref{lem:step_lam_ind}:

\begin{restatable}{theorem}{thmindtrans}
    \label{thm:lp-m_lam2}
  Let $\Phi$ be an axiom environment for a logic program $P$, and let  $\Phi \vdash e : F$ for a 
  formula $F$ by  the \ref{ir:lp-m} and \ref{ir:lam} rules.  Given a formula $F'$,
  $P \vDash_{ind} F'$ iff  $P, F \vDash_{ind} F'$.
\end{restatable}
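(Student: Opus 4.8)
The plan is to reduce the statement to a single identity of model sets: that adding the clause $F$ to $P$ does not change the least Herbrand model, i.e.\ $\mathcal{M}_{P,F} = \MP$, where $\mathcal{M}_{P,F}$ denotes the least Herbrand model of the program obtained by extending $P$ with $F$ (read as a clause $B_1,\dots,B_n \Rightarrow A$, with empty body, $n = 0$, when $F$ is atomic). Once this identity is in place, both directions of the ``iff'' are immediate: by definition the validity of the formula $F'$ --- whether $F'$ is atomic or Horn --- depends only on the underlying set of ground atoms, so $P \vDash_{ind} F'$ iff $F'$ is valid in $\MP$ iff $F'$ is valid in $\mathcal{M}_{P,F}$ iff $P, F \vDash_{ind} F'$.

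The first ingredient is that $P \vDash_{ind} F$, which is exactly Theorem~\ref{thm:lp-m_lam} applied to the given derivation $\Phi \vdash e : F$. (This is where Lemma~\ref{lem:step_lam_ind} enters, since it is the engine of Theorem~\ref{thm:lp-m_lam}; so the theorem does ``follow from'' that lemma, if indirectly.) Next I would observe the inclusion $\MP \subseteq \mathcal{M}_{P,F}$, which is pure monotonicity of the semantic operator in the program: every ground instance of a clause of $P$ is also a ground instance of a clause of $P, F$, so $\TP(I) \subseteq \mathcal{T}_{P,F}(I)$ for all $I \subseteq \BS$; hence $\mathcal{M}_{P,F}$, being a fixed point of $\mathcal{T}_{P,F}$, is a pre-fixed point of $\TP$, and therefore contains the least one, namely $\MP$.

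For the reverse inclusion $\mathcal{M}_{P,F} \subseteq \MP$ I would show that $\MP$ is itself a fixed point of $\mathcal{T}_{P,F}$; leastness of $\mathcal{M}_{P,F}$ then yields the inclusion. Writing $F$ as $B_1,\dots,B_n \Rightarrow A$, for any $I \subseteq \BS$ we have $\mathcal{T}_{P,F}(I) = \TP(I) \cup \{\, \sigma A \mid \sigma \text{ is grounding for } F \text{ and } \sigma B_1,\dots,\sigma B_n \in I \,\}$. Taking $I = \MP$: the first summand equals $\MP$ since $\MP$ is a fixed point of $\TP$; and for the second, if $\sigma B_1,\dots,\sigma B_n \in \MP$ then these are ground atoms of $\MP$, hence valid in $\MP$, so validity of $F$ in $\MP$ (established above) forces $\sigma A$ to be valid in $\MP$, and, being ground, $\sigma A \in \MP$. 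Thus $\mathcal{T}_{P,F}(\MP) = \MP$, completing the identity $\mathcal{M}_{P,F} = \MP$ and hence the theorem.

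The step I expect to need the most care is this last one --- unwinding ``$F$ is valid in $\MP$'' into the concrete statement that every ground instance $\sigma A$ with $\sigma B_i \in \MP$ already lies in $\MP$ --- and in particular handling uniformly the atomic case (where $n=0$ and the claim degenerates to ``every ground instance of $F$ belongs to $\MP$'', which is just part~a) of Lemma~\ref{lem:sem_prop} read off the model $\MP$) and the implicative case (which is the Horn clause in the definition of validity, used exactly as in part~b) of Lemma~\ref{lem:sem_prop}). Everything else is routine Knaster--Tarski bookkeeping; it does not matter here whether $F$ happens to contain existential variables, since $\TP$ and $\vDash_{ind}$ are defined for arbitrary logic programs.
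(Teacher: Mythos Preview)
Your proposal is correct and follows the same route as the paper: use soundness (Theorem~\ref{thm:lp-m_lam}) to get $P \vDash_{ind} F$, deduce $\mathcal{M}_{P,F} = \MP$, and read off the equivalence. The paper's own proof simply asserts the model identity in one line, whereas you spell out both inclusions via Knaster--Tarski; this is a difference in detail, not in approach. (Incidentally, your citation of Theorem~\ref{thm:lp-m_lam} is the accurate one for a derivation using both \ref{ir:lp-m} and \ref{ir:lam}.)
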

\begin{fullproof}{proofindtrans}
    By the Theorem \ref{thm:ind_sound}, $P \vDash_{ind} F$. Therefore, $\MP$ is
    a model of $F$ and $\MP = \MP {}_{, F}$. Hence $P \vDash_{ind} F'$ iff  $P,
    F \vDash_{ind} F'$.
    \qed
\end{fullproof}
\noindent
Note, however,  that the above theorem is not as trivial as it looks, in
particular, it would not hold coinductively, \ie if we changed $ \vDash_{ind}$
to  $ \vDash_{coind}$ in the statement above.  Consider the following proof of
the formula $A \Rightarrow A$:
   
\begin{example}
    \label{ex:a_impl_a}
Using the ~\ref{ir:lam} rule, one can prove $\emptyset \vdash
\lambda\alpha . \alpha : A => A$:
    \begin{prooftree}
        \footnotesize
        \AxiomC{}
        \UnaryInfC{$(\alpha : ~\Rightarrow A) \vdash \alpha : A$}
        \RightLabel{\sc Lam}
        \UnaryInfC{$\emptyset \vdash \lambda \alpha . \alpha : A \Rightarrow A$}
    \end{prooftree}
    %
    \newknote{I commented out one line above, as it confused me in the context of this example. it was about inductive models, whereas the example talks about coinductive models}
Assume a program consisting of a single formula $A \Rightarrow B$.
Both the least and the greatest Herbrand model of this program are empty.
However, adding the formula $A \Rightarrow A$ to the program results in the
greatest Herbrand model $\{A, B\}$. Thus, 
$\McP \neq \mathcal{M}'_{P, (A \Rightarrow A)}$.
\end{example}


\section{Coinductive Soundness of Corecursive Type Class Resolution}
\label{sec:coind_sound}

The \ref{ir:lp-m} rule may result in
non-terminating resolution.  This can be demonstrated by the program
$P_{EvenOdd}$ and the query
$\mathtt{?}~\mathtt{eq}(\mathtt{evenList}(\mathtt{Int}))$ from Section
\ref{sec:intro}. L\"ammel and Peyton Jones observed~\cite{LammelJ05} that in
such cases there may be a cycle in the inference that can be detected. 
This treatment of cycles amounts to coinductive reasoning and results in
building a corecursive proof witness---\ie a \emph{(co-)recursive dictionary}.

\begin{definition}[Coinductive type class resolution]
\begin{align*}
    \inference[if $\HNF(e)$]
    {
        \Phi, (\alpha : ~\Rightarrow A) \vdash e : A  
    } 
    {
        \Phi \vdash \nu \alpha . e : A
    } \tag{\sc Nu'} \label{ir:nu'}
\end{align*}
\end{definition}
The side condition of \ref{ir:nu'} requires the proof witness to be in
head normal form.
Since, in this section, we are working with a calculus consisting of the rules
\ref{ir:lp-m} and \ref{ir:nu'}, there is no way to introduce 
a $\lambda$-abstraction into a proof witness.
Therefore, in this section, we restrict
ourselves to head normal form terms of the form $\kappa~\underline{e}$.

\begin{example}
    Recall the program $P_{EvenOdd}$ in Example~\ref{ex:oddeven}. The
    originally non-terminating resolution trace for the  query
    $\mathtt{?}~\mathtt{eq}(\mathtt{evenList}(\mathtt{int}))$ is resolved using the
    \ref{ir:nu'} rule as follows:
    \begin{prooftree}
        \footnotesize
    \vskip-2em
        \AxiomC{}
        \UnaryInfC{$
                \kappa_3 : \mathtt{eq}(\mathtt{int})
        $}
        \noLine
        \UnaryInfC{$
                \quad\quad
                \vdash \kappa_3 : \mathtt{eq}(\mathtt{int})
        $}

        \AxiomC{}
        \UnaryInfC{$
                \kappa_3 : \mathtt{eq}(\mathtt{int})
        $}
        \noLine
        \UnaryInfC{$
                \quad\quad\quad
                \vdash \kappa_3 : \mathtt{eq}(\mathtt{int})
        $}
        \AxiomC{}
        \UnaryInfC{$
                \alpha : ~\Rightarrow \mathtt{eq}(\mathtt{evenList}(\mathtt{int}))
        $}
        \noLine
        \UnaryInfC{$
                \quad\quad
                    \vdash \alpha : \mathtt{eq}(\mathtt{evenList(\mathtt{int})})
        $}
        \BinaryInfC{$
            \Phi_{EvenOdd}, 
                \alpha : \_
                \vdash \kappa_1 \kappa_3 \alpha : 
                \mathtt{eq}(\mathtt{oddList}(\mathtt{int}))
        $}
        \BinaryInfC{$
            \Phi_{EvenOdd}, \alpha : \_
            \vdash \kappa_2 \kappa_3 (\kappa_1 \kappa_3 \alpha) : 
                \mathtt{eq}(\mathtt{evenList}(\mathtt{int}))
        $}
        \RightLabel{\sc Nu'}
        \UnaryInfC{$
            \Phi_{EvenOdd} 
            \vdash \nu \alpha . \kappa_2 \kappa_3 (\kappa_1 \kappa_3 \alpha) : 
                \mathtt{eq}(\mathtt{evenList}(\mathtt{int}))
        $}
    \end{prooftree}
    Note that we abbreviate repeated formulae in the environment using an underscore. We will use this notation in the rest of the
    paper.
\end{example}

We can now discuss the coinductive soundness of the
\ref{ir:nu'} rule, \ie its soundness relative to the greatest Herbrand models.  We
note that, not surprisingly (\cf \cite{Sangiorgi09}),  the  \ref{ir:nu'} rule is inductively unsound. Given a program
consisting of just one clause: $\kappa: A \Rightarrow A$, we are able to
use the rule  \ref{ir:nu'} to entail $A$ (the derivation of this will be similar to, albeit
a lot simpler than, that in the above example). However, $A$ is not in the least
Herbrand model of this program.
Similarly, the formula $ \mathtt{eq}(\mathtt{oddList}(\mathtt{int}))$ that was proved above is also not inductively sound.
%
Thus, the coinductive fragment of the extended corecursive resolution is only
coinductively sound.  When proving the coinductive soundness of  the
\ref{ir:nu'}  rule, we must carefully choose the proof method by which we proceed.
Inductive soundness of the  \ref{ir:lp-m} rule was proven by induction on the
derivation tree and through the construction of the least Herbrand models by iterations
of $\TP$.  Here, we give an  analogous result, where coinductive soundness is
proved by structural coinduction on the iterations of the semantic operator
$\TP$.

In order for the principle of structural coinduction to be applicable in our
proof, we must ensure that the construction of the greatest Herbrand model is
completed within $\omega$ steps of iteration of $\TP$.  This does not hold in
general for the greatest Herbrand model construction, as was shown e.g.
in~\cite{Lloyd87}. However, it does hold for the restricted shape of Horn
clauses we are working with.  It was noticed by Lloyd~\cite{Lloyd87} that
Restriction \ref{r:noexist} from Section~\ref{sec:intro} implies that the $\TP$ operator 
converges in at most $\omega$ steps. We will capitalise on this fact.
The essence of the coinductive soundness of \ref{ir:nu'} is captured by the
following lemma:

\begin{restatable}{lemma}{lemstepnuprm}
    \label{lem:step_nu'}
    Let $P$ be a logic program, let $\sigma$ be a substitution,
    and let $A$, $B_1$, \dots, $B_n$ be atomic
    formulae. If, $\forall {i \in \{1, \dots, n\}}$, $P, (~\Rightarrow
    \sigma A) \vDash_{coind} \sigma B_i$ and $(B_1, \dots, B_n \Rightarrow A)
    \in P$ then $P \vDash_{coind} \sigma A$.
\end{restatable}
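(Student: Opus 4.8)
The plan is to work directly with the $\downarrow$-construction of the greatest Herbrand model and exploit the fact (recalled just before the lemma, following Lloyd) that, because $P$ has no existential variables, $\TP \downarrow \omega$ is already the greatest Herbrand model, i.e.\ the iteration stabilises at stage $\omega$. So it suffices to show, for every grounding substitution $\tau$ and every $n < \omega$, that $\tau(\sigma A) \in \TP \downarrow n$; then $\tau(\sigma A) \in \TP \downarrow \omega = \McP$, which is exactly $P \vDash_{coind} \sigma A$. Fix such a $\tau$ and abbreviate $A' = \tau\sigma A$ and $B_i' = \tau\sigma B_i$.

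The proof of the displayed fact is by induction on $n$. The base case $n = 0$ is trivial since $\TP \downarrow 0 = \BS$ contains all ground atoms. For the inductive step, assume $A' \in \TP \downarrow n$ and aim to show $A' \in \TP \downarrow(n+1)$. First I would observe that the hypothesis $P, (\Rightarrow \sigma A) \vDash_{coind} \sigma B_i$ means $B_i' \in \mathcal{M}'_{P,(\Rightarrow \sigma A)}$, and — again invoking convergence in $\omega$ steps, now for the slightly larger program $P' = P,(\Rightarrow\sigma A)$ — this set equals $\TP[P'] \downarrow \omega$; moreover one checks easily that $\TP[P'] \downarrow m = \TP \downarrow m \;\cup\; \{\text{ground instances of }\sigma A\}$ for all $m$, since adding the fact $\Rightarrow \sigma A$ only ever re-adds those ground atoms at each costage. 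Hence from $A' \in \TP \downarrow n$ we get $A' \in \TP[P'] \downarrow n$, and combined with $B_i' \in \TP[P'] \downarrow \omega \subseteq \TP[P'] \downarrow n$ for every $i$, the ground instance $B_1', \dots, B_n' \Rightarrow A'$ of the clause $(B_1,\dots,B_n \Rightarrow A) \in P$ witnesses, by Definition~\ref{def:semop} of $\TP$ applied to the set $\TP[P'] \downarrow n$, that $A' \in \TP[P'] \downarrow(n+1)$. Since $A'$ is a ground instance of $\sigma A$, the identity for $\TP[P']$ above gives back $A' \in \TP \downarrow(n+1)$, closing the induction.

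The one place needing care — and what I expect to be the main obstacle — is the bookkeeping lemma that relates the $\downarrow$-iterates of $P$ and of $P' = P,(\Rightarrow\sigma A)$, i.e.\ that $\TP[P'] \downarrow m$ is exactly $\TP \downarrow m$ together with the ground instances of $\sigma A$. This is the step that makes the coinductive hypothesis about the \emph{augmented} program usable inside the construction for the original $P$; it requires a small separate induction on $m$ and relies essentially on $\sigma A$ being added as a fact (empty body), so that it is reinstated unconditionally at every costage and never interacts with the rest of the program. Once that is in hand, the argument above is routine. Note that, exactly as with the companion inductive Lemma~\ref{lem:step_lam_ind}, no use is made of $\McP$ being the \emph{greatest} fixed point beyond the $\omega$-convergence; the whole proof is the $\TP\downarrow$-analogue of the $\TP\uparrow$ reasoning used there, which is precisely the ``structural coinduction on the iterations of $\TP$'' advertised in the text.
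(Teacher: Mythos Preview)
Your ``bookkeeping lemma'' $\mathcal{T}_{P'}\downarrow m = (\TP\downarrow m)\cup G$, where $G$ is the set of ground instances of $\sigma A$ and $P' = P,(\Rightarrow\sigma A)$, is false. Take $P=\{a\Rightarrow b\}$ over a Herbrand base $\{a,b\}$ and let $\sigma A=a$. Then $\TP\downarrow 2=\emptyset$ while $\mathcal{T}_{P'}\downarrow 2=\{a,b\}$, whereas your identity predicts $\{a\}$. Re-adding the fact $a$ at every costage of $P'$ \emph{does} interact with the rest of the program: it keeps $b$ alive through the clause $a\Rightarrow b$, contrary to your justification. Even if the identity held, your closing step would still fail: from $A'\in\mathcal{T}_{P'}\downarrow(n{+}1)=(\TP\downarrow(n{+}1))\cup G$ together with $A'\in G$ one cannot conclude $A'\in\TP\downarrow(n{+}1)$; membership in a union plus membership in one summand says nothing about the other.

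The paper avoids comparing the two towers of iterates. Its essential move is to take the induction hypothesis quantified over \emph{all} grounding $\tau$, i.e.\ $G\subseteq\TP\downarrow n$, rather than fixing $\tau$ before the induction as you do. Under that stronger IH, and using $\TP\downarrow(n{+}1)\subseteq\TP\downarrow n$, one has $\mathcal{T}_{P'}(\TP\downarrow n)=\TP(\TP\downarrow n)\cup G\subseteq\TP\downarrow n$, so $\TP\downarrow n$ is a post-fixed point of $\mathcal{T}_{P'}$ and hence contains $\mathcal{M}'_{P'}$. This places every $\tau\sigma B_i$ directly in $\TP\downarrow n$; the clause $B_1,\dots,B_n\Rightarrow A\in P$ then gives $\tau\sigma A\in\TP\downarrow(n{+}1)$ for every $\tau$, closing the induction without any detour through $\mathcal{T}_{P'}\downarrow m$.
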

\noindent
The proof of the lemma is similar to the proof of the Lemma \ref{lem:step_nu}
in the next section and we do not state it here.%
\begin{fullproof}{proofstepnuprm}
  Consider now the construction of the greatest Herbrand model for the program
  $P$ and proceed by coinduction with coinductive hypothesis: for all $n$,
  for any grounding substitution $\tau$, $(\tau \circ \sigma) A \in \TP
  \downarrow n$. Since, for any $\tau$, $(\tau \circ \sigma) A \in \TP
  \downarrow n$ the set $\TP \downarrow n$ is by definition of the operator
  $\TP$ the same as the set $\TP{}_{,(~\Rightarrow \sigma A)}$ and from the
  assumptions of the lemma and monotonicity of $\TP$ also, for all $i$, for any
  grounding substitution $\tau$, $(\tau \circ \sigma) B_i \in \TP \downarrow n$.
  Since $B_1, \dots, B_n => A \in P$ also $(\tau \circ \sigma) A \in \TP
  \downarrow (n + 1)$. We now apply the
  coinduction hypotheses to conclude that the same will be true for all
  subsequent iterations of $\TP$.  But then all instances of $\sigma A$ will be in
  the greatest Herbrand model of this program. Hence $P \vDash_{coind} \sigma A$
  \qed
\end{fullproof}%

\newknote{I cannot make any sense of the new proof by coinduction. It looks like someone was trying to cheat, and gave a mock proof instead of a real proof. I cannot work out: when Coinductive assumption is formed, when it is applied, and why should I know that `` from the
  assumptions of the lemma'' the assumption will be applicable. Overall, I cannot regard this as a proof, i am sorry. And this is the main result of the paper. }
  \fnote{I believe the proof is along the lines of the proof on the whiteboard
  modulo some technical details, which were necessary}

\vspace{1em}
\noindent
Finally, Theorem \ref{thm:lp-m_nu'} states universal coinductive soundness of
the coinductive type class resolution:

\begin{restatable}{theorem}{thmlpmnuprm}
    \label{thm:lp-m_nu'}
    Let $\Phi$ be an axiom environment for a logic program $P$ and $F$ a
    formula. Let $\Phi \vdash e : F$ be by the \ref{ir:lp-m} and \ref{ir:nu'}
    rules.  Then $\Phi \vDash_{coind} F$.
\end{restatable}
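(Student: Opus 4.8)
The plan is to proceed by structural induction on the derivation tree of $\Phi \vdash e : F$, splitting on the last rule applied. Since the calculus in this section consists only of \ref{ir:lp-m} and \ref{ir:nu'}, and since \ref{ir:nu'} forbids $\lambda$-abstraction, the formula $F$ will always be atomic; I would note this at the outset so that the statement reduces to the atomic case. The base case and the \ref{ir:lp-m}-step are handled exactly as in the inductive argument of Theorem~\ref{thm:ind_sound}, but invoking parts a) and c) of Lemma~\ref{lem:sem_prop} instead of parts a) and b): if the last step applies an axiom $(\kappa : B_1, \dots, B_n \Rightarrow A') \in \Phi$ with $F = \sigma A'$ and premises $\Phi \vdash e_i : \sigma B_i$, then by the induction hypothesis $P \vDash_{coind} \sigma B_i$ for each $i$, and since $(B_1,\dots,B_n \Rightarrow A') \in P$ by Definition~\ref{def:axenv}, Lemma~\ref{lem:sem_prop}c) gives $P \vDash_{coind} \sigma A'$.

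The interesting case is when the last rule is \ref{ir:nu'}, so $e = \nu\alpha.e'$ with $\HNF(e')$ and $F = \sigma A$ for some clause head $A$, and the premise is $\Phi, (\alpha : \Rightarrow \sigma A) \vdash e' : \sigma A$. Here I would first peel off the head-normal-form structure: since $e' = \kappa\,\underline{e'}$ for an axiom $(\kappa : B_1,\dots,B_n \Rightarrow A) \in \Phi$, the derivation of the premise must end with an \ref{ir:lp-m}-step, giving subderivations $\Phi, (\alpha : \Rightarrow \sigma A) \vdash e'_i : \sigma B_i$ for each $i$. The key point is that the environment $\Phi, (\alpha : \Rightarrow \sigma A)$ is exactly an axiom environment for the program $P, (\Rightarrow \sigma A)$, so applying the induction hypothesis to each subderivation yields $P, (\Rightarrow \sigma A) \vDash_{coind} \sigma B_i$. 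Now Lemma~\ref{lem:step_nu'} applies directly — with the clause $(B_1,\dots,B_n \Rightarrow A) \in P$ — to conclude $P \vDash_{coind} \sigma A$, which is $P \vDash_{coind} F$ (and hence $\Phi \vDash_{coind} F$).

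The main obstacle is making the induction hypothesis applicable inside the \ref{ir:nu'}-case: the subderivations of the premise are derivations over the \emph{extended} environment $\Phi, (\alpha : \Rightarrow \sigma A)$, not over $\Phi$ itself, so one must be careful that the induction is set up over "derivations from an arbitrary axiom environment for an arbitrary program" rather than fixed to $\Phi$ and $P$. Once the statement is phrased that way, the correspondence between $\Phi, (\alpha : \Rightarrow \sigma A)$ and the program $P, (\Rightarrow \sigma A)$ (via Definition~\ref{def:axenv}) makes the recursive call legitimate, and the entire coinductive content is quarantined inside Lemma~\ref{lem:step_nu'}, whose proof in turn relies on the fact that $\TP$ reaches its greatest fixed point in $\omega$ steps under Restriction~\ref{r:noexist}. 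A secondary subtlety is checking that repeated occurrences of $\alpha$ deeper in $e'$ (nested $\nu$'s or reuse of the coinductive hypothesis) are covered; this is absorbed by the structural induction, since every such occurrence sits in a strictly smaller subderivation.
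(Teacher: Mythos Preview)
Your proposal is correct and follows essentially the same approach as the paper: structural induction on the derivation, with the \ref{ir:lp-m} cases handled via Lemma~\ref{lem:sem_prop}a)/c) and the \ref{ir:nu'} case handled by unpacking the head-normal-form witness into an \ref{ir:lp-m} step, applying the induction hypothesis to the subderivations in the extended environment, and then invoking Lemma~\ref{lem:step_nu'}. If anything, you are more explicit than the paper about the need to generalise the induction hypothesis to arbitrary axiom-environment/program pairs so that it applies to $\Phi,(\alpha:\Rightarrow\sigma A)$ versus $P,(\Rightarrow\sigma A)$; the paper's appendix proof simply writes ``by the induction assumption, for all $i$, $\Phi,(\alpha:\Rightarrow A)\vDash B_i$'' without flagging this point.
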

\begin{shortproof}
    By structural induction on the derivation tree using Lemmata
    \ref{lem:sem_prop} \& \ref{lem:step_nu'}.
    \qed
\end{shortproof}
\begin{fullproof}{prooflpmnuprm}
    By structural induction on the derivation tree.

    Base case: Let the derivation be with empty assumptions. Then it is by the
    rule \ref{ir:lam} and of the form $\inference[]{}{\Phi \vdash \kappa : \sigma
    A}$ for an atomic formula $A$, a constant symbol $\kappa$, and a
    substitution $\sigma$. By definition of the rule \ref{ir:lp-m} there is a
    clause  $(\kappa : ~\Rightarrow A) \in \Phi$.
    By Lemma~\ref{lem:sem_prop} c), $\Phi \vDash_{coind} \sigma A$.

    Inductive case: Let the last step be by the rule \ref{ir:lp-m} and of
    the form $\inference[]{\Phi \vdash e_1 : \sigma B_1
    \dots \Phi \vdash e_n : \sigma B_n}{\Phi \vdash \kappa e_1 \dots e_n :
    \sigma A}$
    for an atomic formulae $A$, $B_1$, \dots, $B_n$ a constant symbol $\kappa$,
    a substitution $\sigma$ and proof witnesses $e_1$, \dots, $e_n$.  By
    definition of the rule \ref{ir:lp-m} there is a clause $(\kappa : B_1,
    \dots B_n \Rightarrow A) \in \Phi$.
    
    By the induction assumption, for $i \in \{1, \dots, n\}$, $\Phi
    \vDash_{coind} B_i$ and by Lemma~\ref{lem:sem_prop} d), $\Phi
    \vDash_{coind} \sigma A$.

    Let the last step be by the rule \ref{ir:nu'} and of the form
    $\inference[]{\Phi, (\alpha : ~\Rightarrow A) \vdash e : A}{\Phi \vdash \nu
    \alpha . e : A}$ for an atomic formula $A$, a variable $\alpha$ and a proof
    witness $e$ in the head normal form. W.l.o.g. let $e = \kappa e_1 \dots e_n$.
    Therefore there is an 
    inference step of the
    form $\inference[]{\Phi \vdash e_1 : \sigma B_1'
    \dots \Phi \vdash e_n : \sigma B_n'}{\Phi \vdash \kappa e_1 \dots e_n :
    \sigma A'}$ for $\sigma A' = A$ and $(\kappa : B'_1, \dots B'_n \Rightarrow
    A') \in \Phi$. By the induction assumption, for all $i$, $\Phi, (\alpha :
    ~\Rightarrow A) \vDash B_i$.
    By Lemma~\ref{lem:step_nu'}, $\Phi \vDash_{coind} A$.
    \qed
\end{fullproof}


\subsubsection{Choice of Coinductive Models.}
Perhaps the most unusual feature of the semantics given in this section is the use of 
the greatest Herbrand models rather than
the greatest \emph{complete} Herbrand models. The latter is more common in the literature on coinduction in
logic programming \cite{Lloyd87,KomendantskayaJ15,SimonBMG07}.
 \emph{The greatest complete Herbrand models}
 are obtained as the
greatest fixed point of the semantic operator $\TP'$ on the \emph{complete
Herbrand base}, \ie the set of all finite and \emph{infinite} ground atomic
formulae formed by the signature of the given program.
This construction is preferred in the literature for two reasons.
Firstly, $\TP'$ reaches its greatest fixed point in at
most $\omega$ steps, whereas $\TP$ may take more than $\omega$ steps in the general case.
This is due to compactness
of the complete Herbrand base.  Moreover, greatest complete Herbrand models give
a more natural characterisation for programs like the one given by the clause
$\kappa_{inf}:  \mathtt{p}(x) \Rightarrow \mathtt{p}(\mathtt{f}(x))$.  The
greatest Herbrand model of that program 
is empty.  However, its greatest complete Herbrand model 
contains the infinite formula $\mathtt{p}(\mathtt{f}(\mathtt{f}(...))$. 
Restrictions \ref{r:nonoverlap} -- \ref{r:match}, imposed by type class
resolution, mean that the greatest Herbrand models regain those same advantages
as complete Herbrand models.
It was noticed by Lloyd~\cite{Lloyd87} that restriction \ref{r:noexist} implies
that the semantic operator converges in at most $\omega$ steps. Restrictions
\ref{r:nonoverlap} and \ref{r:match} imply that proofs by type class resolution
have a universal interpretation, \ie that they hold for all finite instances of
queries. Therefore, we never need to talk about programs for which only one
infinite instance of a query is valid.


\section{Coinductive Soundness of Extended Corecursive Type Class Resolution}
\label{sec:ext_coind_sound}

The class of problems that can be resolved by coinductive type class resolution
is limited to problems where a coinductive hypothesis is in atomic form.
Fu \etal \cite{FuKSP16} extended coinductive type
class resolution with implicative reasoning and adjusted the rule \ref{ir:nu'}
such that this restriction of coinductive type class resolution is relaxed:

\begin{definition}[Extended coinductive type class resolution]
\begin{align*}
    \inference[if $\HNF(e)$]
    {
        \Phi, (\alpha : B_1,\dots,B_n \Rightarrow A)
            \vdash e : B_1,\dots,B_n \Rightarrow A & 
    } 
    {
        \Phi 
            \vdash \nu \alpha . e
            : B_1,\dots,B_n \Rightarrow A
    } \tag{\sc Nu} \label{ir:nu}
    %
\end{align*}
\end{definition}
The side condition of the \ref{ir:nu}  rule requires the proof witness to be in
head normal form. However, unlike coinductive type class resolution, 
extended coinductive type class resolution  also uses the \ref{ir:lam}  rule and a
head normal term is of the form $\lambda \underline{\alpha} . \kappa
\underline{e}$ for possibly non-empty sequence of proof term variables $\underline{\alpha}$.
First, let us note that extended coinductive type class resolution indeed
extends the calculus of Section~\ref{sec:coind_sound}:

\begin{restatable}{proposition}{propadmisible}
    The inference rule \ref{ir:nu'} is admissible in the extended coinductive
    type class resolution.
\end{restatable}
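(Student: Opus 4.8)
The plan is to show that any instance of the \ref{ir:nu'} rule can be simulated by the \ref{ir:nu} rule together with the \ref{ir:lam} rule, so that \ref{ir:nu'} adds no proof-theoretic power to the extended calculus. Recall that \ref{ir:nu'} has the shape: from $\Phi, (\alpha : {}\Rightarrow A) \vdash e : A$ with $\HNF(e)$, conclude $\Phi \vdash \nu\alpha.e : A$. We want to reconstruct $\Phi \vdash \nu\alpha.e : A$ (or some proof term for $A$) using only the rules of the extended system, namely \ref{ir:lp-m}, \ref{ir:lam}, and \ref{ir:nu}.

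First I would observe that an atomic formula $A$ is, degenerately, the Horn formula with empty antecedent, i.e.\ the case $n = 0$ of $B_1,\dots,B_n \Rightarrow A$. Under that reading, the hypothesis $(\alpha : {}\Rightarrow A)$ appearing in the premise of \ref{ir:nu'} is exactly the hypothesis $(\alpha : B_1,\dots,B_n \Rightarrow A)$ with $n = 0$ that appears in the premise of \ref{ir:nu}. Likewise the side condition $\HNF(e)$ is the same in both rules, and when $n=0$ a head normal term $\lambda\underline{\alpha}.\kappa\,\underline{e}$ with empty $\underline{\alpha}$ is just $\kappa\,\underline{e}$, which is precisely the form of head normal term that Section~\ref{sec:coind_sound} restricts to. Hence an instance of \ref{ir:nu'}
\[
  \inference{\Phi, (\alpha : {}\Rightarrow A) \vdash e : A}{\Phi \vdash \nu\alpha.e : A}
\]
is literally an instance of \ref{ir:nu} with $n = 0$:
\[
  \inference{\Phi, (\alpha : {}\Rightarrow A) \vdash e : A}{\Phi \vdash \nu\alpha.e : A}.
\]
So the same premise derivation, reused verbatim, yields the same conclusion via \ref{ir:nu}.

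The second step is to make precise that the \emph{premise} derivation is still legal in the extended system: the premise $\Phi, (\alpha : {}\Rightarrow A) \vdash e : A$ of an \ref{ir:nu'} application in the Section~\ref{sec:coind_sound} calculus is built from \ref{ir:lp-m} and \ref{ir:nu'} only, so by an outer induction on the structure of that derivation, each \ref{ir:lp-m} step is carried over unchanged and each nested \ref{ir:nu'} step is replaced by an \ref{ir:nu} step as above. This gives a derivation of the same judgement in the extended calculus, and then one final \ref{ir:nu} step reproduces the conclusion. I would phrase this as: every derivation using \ref{ir:lp-m} and \ref{ir:nu'} is also a derivation using \ref{ir:lp-m}, \ref{ir:lam}, and \ref{ir:nu}, by reinterpreting each \ref{ir:nu'} node as the $n=0$ instance of \ref{ir:nu}.

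The only real subtlety — and the step I would flag as the one needing care rather than the one that is hard — is checking that the notion of head normal form and the admissible shape of proof witnesses genuinely coincide in the degenerate case: Section~\ref{sec:coind_sound} explicitly narrows $\HNF$ to terms $\kappa\,\underline{e}$ because no $\lambda$ is available there, whereas the extended system's $\HNF$ permits a leading $\lambda\underline{\alpha}$. One must note that the $n=0$ instance of \ref{ir:nu} forces $\underline{\alpha}$ to be empty (the bound antecedent list is empty), so no spurious $\lambda$-prefix can appear, and the side conditions match exactly. Once that bookkeeping is discharged, admissibility is immediate, since we have exhibited, for every \ref{ir:nu'} inference, a derivation of the identical conclusion from identical premises using only rules of the extended system.
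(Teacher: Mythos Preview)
Your approach is essentially the same as the paper's: both show that \ref{ir:nu'} is the $n=0$ instance of \ref{ir:nu}. There is, however, one syntactic step you elide that the paper makes explicit. In the grammar (Definition~\ref{df:syntax}), atomic formulae $\text{At}$ and Horn formulae $\text{CH}$ are distinct syntactic categories, so the judgement $\Phi,(\alpha:{}\Rightarrow A)\vdash e:A$ in the premise of \ref{ir:nu'} is not literally the judgement $\Phi,(\alpha:{}\Rightarrow A)\vdash e:{}\Rightarrow A$ required by \ref{ir:nu} with $n=0$. The paper bridges this gap by applying the \ref{ir:lam} rule with an \emph{empty} sequence $\underline{\beta}$ of bound variables, obtaining $\Phi,(\alpha:{}\Rightarrow A)\vdash \lambda\underline{\beta}.e:{}\Rightarrow A$, and then identifying $\lambda\underline{\beta}.e$ with $e$ when $\underline{\beta}$ is empty. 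Only then is \ref{ir:nu} applied. You announce in your first sentence that \ref{ir:lam} will be used, but in the body you never actually invoke it; instead you assert that ``$A$ is, degenerately, the Horn formula with empty antecedent'' and treat the two judgements as identical. That identification is exactly what the empty-\ref{ir:lam} step licenses, so your argument is correct once you make that step explicit.

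Your additional paragraph about recursively replacing nested \ref{ir:nu'} occurrences in the premise derivation is more than the paper does (the paper handles a single application), but it is harmless and arguably makes the admissibility claim cleaner.
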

\begin{fullproof}{proofadmisible}
    Let $\Phi$ be an environment, let $A$ be an atomic formula and let $\Phi,
    (\alpha : ~\Rightarrow A) \vdash e : A$ where $e$ is in $\HNF$.
    Then by the \ref{ir:lam} rule $\Phi, (\alpha : ~\Rightarrow A)\vdash \lambda
    \underline{\beta} . e : ~\Rightarrow A$ where $\underline{\beta}$ is an
    empty sequence of variables.  Therefore $\Phi, (\alpha : ~\Rightarrow
    A) \vdash e : ~\Rightarrow A$.  Since $e$ is in head normal
    form by the \ref{ir:nu} rule also $\Phi \vdash \nu \alpha . e : A$. 
    \qed
\end{fullproof}

\noindent
Furthermore, this is a proper extension.
The \ref{ir:nu}  rule allows queries to be entailed that were beyond the scope of
coinductive type class resolution. In Section \ref{sec:intro},
we demonstrated a derivation for query
$\mathtt{?}~\mathtt{eq}(\mathtt{bush}(\mathtt{int}))$ where no
cycles arise and thus the query cannot be resolved by coinductive type class
resolution.

\begin{example}
    Recall the program $P_{Bush}$ in Example~\ref{ex:bush}. The query
    $\mathtt{?}~\mathtt{eq}(\mathtt{bush}(\mathtt{int}))$ is resolved as follows:
    \begin{prooftree}
        \footnotesize
        \AxiomC{}
        \UnaryInfC{$
                \Phi_{Bush}
                \vdash
            $}
        \noLine
        \UnaryInfC{$
                \kappa_1 : \mathtt{eq}(\mathtt{int})
            $}
        \AxiomC{}
        \UnaryInfC{$
            (\beta : ~\Rightarrow \mathtt{eq}(x)) 
            $}
        \noLine
        \UnaryInfC{$
                \quad\quad
                \vdash 
                \beta : 
                \mathtt{eq}(x)
            $}
        \AxiomC{}
        \UnaryInfC{$
                (\beta : ~\Rightarrow \mathtt{eq}(x)) 
                \vdash \beta :
                \mathtt{eq}(x)
            $}
        \UnaryInfC{$
                (\alpha : \mathtt{eq}(x) \Rightarrow \mathtt{eq}(\mathtt{bush}(x))),
                (\beta : \_)
                \vdash
            $}
        \noLine
        \UnaryInfC{$
                \alpha \beta : \mathtt{eq}(\mathtt{bush}(x))
            $}
        \UnaryInfC{$
                (\alpha : \_),
                (\beta : \_)
                \vdash
                \alpha (\alpha \beta) :
                \mathtt{eq}(\mathtt{bush}(\mathtt{bush}(x)))
            $}
        \BinaryInfC{$
                \Phi_{Bush}, (\alpha : \_), (\beta : \_) 
                \vdash
                \kappa_2 \beta (\alpha (\alpha \beta)) :
                \mathtt{eq}(\mathtt{bush}(x))
            $}
        \RightLabel{\sc Lam}
        \UnaryInfC{$
            \Phi_{Bush}, (\alpha : \_) \vdash
                \lambda \beta . \kappa_2 \beta (\alpha (\alpha \beta)) :
                \mathtt{eq}(x) \Rightarrow \mathtt{eq}(\mathtt{bush}(x))
            $}
        \RightLabel{\sc Nu}
        \UnaryInfC{$
            \Phi_{Bush} 
                \vdash
                \nu \alpha . \lambda \beta . \kappa_2 \beta 
                    (\alpha (\alpha \beta)) : \mathtt{eq}(x) \Rightarrow
                    \mathtt{eq}(\mathtt{bush}(x))
                $}
        \BinaryInfC{$
            \Phi_{Bush} \vdash 
                (\nu \alpha . \lambda \beta . \kappa_2 \beta 
                (\alpha (\alpha \beta))) \kappa_1
                : \mathtt{eq}(\mathtt{bush}(\mathtt{int}))
        $}
    \end{prooftree}
\end{example}


Before proceeding with the proof of soundness of extended type class resolution
we need to show two intermediate lemmata. The first lemma states that inference
by the \ref{ir:nu} rule  preserves coinductive soundness:

\begin{restatable}{lemma}{lemstepnu}
    \label{lem:step_nu}
    Let $P$ be a logic program, let $\sigma$ be a substitution, and let $A$, $B_1$,
    \dots, $B_n$, $C_1$, \dots, $C_m$ be atomic formulae. If, for all $i$, %
    $P, B_1, \dots, 
    B_n, (B_1, \dots, B_n \Rightarrow \sigma A) \vDash_{coind} \sigma C_i$
    and $(C_1, \dots, C_m \Rightarrow A) \in P$ then 
    $P \vDash_{coind} B_1 \dots B_n \Rightarrow \sigma A$.
\end{restatable}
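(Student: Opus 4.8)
The plan is to prove Lemma~\ref{lem:step_nu} by exhibiting that every grounding instance of $\sigma A$ (and hence, via the definition of validity of Horn formulae, every instance satisfying the hypotheses $B_1,\dots,B_n$) survives all iterations $\TP \downarrow n$ of the semantic operator, so that it lies in the greatest Herbrand model $\McP = \TP \downarrow \omega$. The key observation — which mirrors the argument sketched for Lemma~\ref{lem:step_nu'} — is that because the Horn clauses here contain no existential variables (Restriction~\ref{r:noexist}), the greatest fixed point of $\TP$ is reached in at most $\omega$ steps, so structural coinduction on $n$ is a legitimate proof principle.

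First I would fix an arbitrary substitution $\tau$ such that $\tau B_1, \dots, \tau B_n$ are all valid in $\McP$ and $\tau$ is grounding for $\sigma A$; the goal becomes $(\tau \circ \sigma) A \in \McP$, equivalently $(\tau\circ\sigma)A \in \TP \downarrow n$ for every $n$. I would set up the coinductive hypothesis: for all $n$, and for every grounding $\tau$ making the $\tau B_i$ valid in $\McP$, we have $(\tau\circ\sigma)A \in \TP\downarrow n$. The base case $n = 0$ is immediate since $\TP\downarrow 0 = \BS$ contains all ground atoms. For the inductive step, I would use the fact that the set $\TP{}_{, B_1,\dots,B_n,(B_1,\dots,B_n\Rightarrow \sigma A)}\downarrow n$ coincides with $\TP\downarrow n$ once we restrict attention to grounding substitutions under which the $B_i$ and the extra clause $B_1,\dots,B_n\Rightarrow\sigma A$ are already satisfied — that is, the extra axioms and the extra implicative clause do not add anything beyond what the coinductive hypothesis already gives us for $\sigma A$. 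Combining this with the hypothesis of the lemma, $P, B_1,\dots,B_n,(B_1,\dots,B_n\Rightarrow\sigma A)\vDash_{coind}\sigma C_i$, yields $(\tau\circ\sigma)C_i \in \TP\downarrow n$ for each $i$; then since $(C_1,\dots,C_m\Rightarrow A)\in P$, the clause definition of $\TP$ gives $(\tau\circ\sigma)A \in \TP\downarrow (n+1)$, closing the coinduction. Finally, since $(\tau\circ\sigma)A$ lies in every $\TP\downarrow n$ it lies in $\TP\downarrow\omega = \McP$, so $\tau\sigma A$ is valid in $\McP$ for all such $\tau$, which is exactly $P\vDash_{coind} B_1,\dots,B_n\Rightarrow\sigma A$.

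The main obstacle I anticipate is making rigorous the claim that the augmented program $P, B_1,\dots,B_n,(B_1,\dots,B_n\Rightarrow\sigma A)$ "behaves the same as" $P$ under the relevant grounding substitutions at stage $n$. One has to be careful that the coinductive hypothesis is phrased for the ground atom $(\tau\circ\sigma)A$ with the specific $\tau$ at hand, and to check that feeding the hypothesis back in (the essence of the coinductive step) is not circular: the clause $B_1,\dots,B_n\Rightarrow\sigma A$ can only "fire" to produce $\sigma A$ when its premises $B_i$ are present, and those are present precisely because we assumed $\tau B_i$ valid in $\McP$ and $\McP = \TP\downarrow\omega \subseteq \TP\downarrow n$. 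Monotonicity of $\TP$ and the $\omega$-convergence guaranteed by Restriction~\ref{r:noexist} are the two facts doing the real work, and I would state explicitly where each is invoked. The rest is a routine unfolding of Definitions~\ref{def:semop} and of validity in $\McP$.
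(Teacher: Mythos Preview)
Your proposal is correct and follows essentially the same route as the paper: coinduction on the stages $\TP\downarrow n$, using the clause $(C_1,\dots,C_m\Rightarrow A)\in P$ together with the lemma hypothesis to push $(\tau\circ\sigma)A$ from stage $n$ to stage $n{+}1$. The only cosmetic difference is that the paper states the coinductive invariant as validity of the whole Horn formula $B_1,\dots,B_n\Rightarrow\sigma A$ in $\TP\downarrow n$ (and therefore handles separately the vacuous case where some $\tau B_i\notin\TP\downarrow n$), whereas you fix $\tau$ with $\tau B_i$ valid in $\McP$ from the outset and sidestep that case split.
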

\begin{resproof}{proofstepnu}
    Consider the construction of the greatest Herbrand model of the
    program~$P$ and proceed by coinduction with coinductive hypothesis: for all
    $n$, $B_1, \dots, B_n => \sigma A$ is valid in $\TP \downarrow n$. Assume
    that, for a grounding substitution $\tau$, for all $i$, $\tau B_i \in \TP
    \downarrow n$. Then also $(\tau \circ \sigma) A \in \TP \downarrow n$. For the
    definition of the semantic operator, it follows from the monotonicity of the operator itself, and from
    the assumptions made by the lemma that $(\tau \circ \sigma) C_i \in \TP
    \downarrow n$.  Since $C_1, \dots, C_n => A \in P$ also $(\tau \circ \sigma)
    A \in \TP \downarrow (n + 1)$. If the assumption does not hold then from the
    monotonicity of $\TP$ it follows that, for all $i$, $\tau B_i \not\in \TP
    \downarrow (n + 1)$. Therefore, $B_1, \dots, B_n => \sigma A$ is valid if
    $\TP \downarrow (n + 1)$. We apply the
    coinductive hypothesis to conclude that the same holds for all
    subsequent iterations of $\TP$. Hence whenever, for a
    substitution $\tau$, all instances of $\tau B_1$ to $\tau B_n$ are in the
    greatest Herbrand model then also all instances of $(\tau \circ \sigma) A$
    are in the greatest Herbrand model. Hence $P \vDash_{coind} B_1, 
    \dots, B_n \Rightarrow A$.
    \qed
\end{resproof}

\newknote{Again, this is not a full proof. May be, after someone spends half a day, that person can reconstruct  a proof. But that does not mean that *you* gave a proof here.
We spent one day to recover the proof -- and I see no reflection of that work in the text. It remains as unreadable as it was when I first saw it.
}
  \fnote{I believe the proof is along the lines of the proof on the whiteboard
  modulo some technical details, which were necessary}

\noindent
The other lemma that we need in order to prove coinductive soundness of extended type
class resolution states that inference using
\ref{ir:lam} preserves coinductive soundness, \ie we need to show the coinductive
counterpart to Lemma \ref{lem:step_lam_ind}:

\begin{restatable}{lemma}{lemsteplamcoind}
    \label{lem:step_lam_coind}
    Let $P$ be a logic program and $A$, $B_1$, \dots, $B_n$ atomic formulae.
    If $P, ( ~\Rightarrow B_1), \dots ( ~\Rightarrow B_n)
    \vDash_{coind} A$ then $P \vDash_{coind} B_1, \dots, B_n \Rightarrow A$.
\end{restatable}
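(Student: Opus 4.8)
The plan is to prove this as the coinductive counterpart of Lemma~\ref{lem:step_lam_ind}, replacing the induction on the stages $\TP \uparrow k$ by a stagewise argument over the stages $\TP \downarrow k$, in the same spirit as Lemma~\ref{lem:step_nu'} and Lemma~\ref{lem:step_nu}. The enabling fact is that, since $P$ contains no existential variables (Restriction~\ref{r:noexist}), Lloyd's result gives $\McP = \TP \downarrow \omega$, so the greatest Herbrand model is reached within $\omega$ iterations and $\McP = \bigcap_{k < \omega} \TP \downarrow k$. Write $P'$ for the enriched program $P, (\Rightarrow B_1), \dots, (\Rightarrow B_n)$ and $D$ for the set of all ground instances of $B_1, \dots, B_n$, so that $T_{P'}(I) = \TP(I) \cup D$ for every $I$; the hypothesis $P' \vDash_{coind} A$ says exactly that every ground instance of $A$ lies in $\mathcal{M}'_{P'} = T_{P'} \downarrow \omega$ (the added clauses have empty bodies, so $P'$ is also existential-variable free and $\omega$-convergent).

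First I would fix a substitution $\sigma$ such that each $\sigma B_i$ is valid in $\McP$, and aim to show $\tau\sigma A \in \McP$ for every grounding $\tau$; by the above this reduces to proving, by induction on $k$, that $\tau\sigma A \in \TP \downarrow k$ for all $k < \omega$ and all grounding $\tau$. The base case $k=0$ is trivial, since $\TP \downarrow 0 = \BS$. For the step I would first record the clean auxiliary fact that underlies the whole argument: writing $D_\sigma \subseteq D$ for the set of ground instances of the $\sigma B_i$, the model $\McP$ is a fixed point of $I \mapsto \TP(I) \cup D_\sigma$ (because $D_\sigma \subseteq \McP = \TP(\McP)$), and in fact its greatest fixed point (any such fixed point $I$ has $I \cup \McP$ a post-fixed point of $\TP$, hence $I \subseteq \McP$); informally, adding already-valid facts does not enlarge the greatest Herbrand model.

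For the induction step proper, assume $\tau'\sigma A \in \TP \downarrow k$ for all grounding $\tau'$ and fix $\tau$; the goal is $\tau\sigma A \in \TP(\TP \downarrow k) = \TP \downarrow (k+1)$. Using the hypothesis, $\tau\sigma A \in \mathcal{M}'_{P'} = T_{P'}(\mathcal{M}'_{P'})$, so $\tau\sigma A$ is either a member of $D$ or the head of a ground instance of a clause of $P$ whose body atoms lie in $\mathcal{M}'_{P'}$. One then transfers this one-step $P'$-derivation into a one-step $\TP$-derivation over $\TP \downarrow k$: the $B_i$-facts it rests on are replaced, via the auxiliary fact and monotonicity of $\TP$, by the $\sigma$-instances $D_\sigma$, which are guaranteed present since $D_\sigma \subseteq \McP \subseteq \TP \downarrow k$, while the remaining body atoms are discharged by the induction hypothesis. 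Once $\tau\sigma A \in \TP \downarrow k$ is known for all $k$, and likewise each $\sigma B_i$ is valid in every $\TP \downarrow k$, validity of $B_1, \dots, B_n \Rightarrow A$ in $\McP = \bigcap_k \TP \downarrow k$ follows immediately, which is $P \vDash_{coind} B_1, \dots, B_n \Rightarrow A$.

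The main obstacle is the transfer step in the previous paragraph: the coderivation of $\tau\sigma A$ that the hypothesis supplies lives over the enriched program $P'$ and may a priori appeal to ground instances of the $B_i$ that are not $\sigma$-instances, so real care is needed to see that only the $\sigma$-relevant instances are actually required and that every intermediate body atom stays within $\TP \downarrow k$, so that the induction genuinely closes. This is where the restrictions imposed by type-class resolution earn their keep --- Restriction~\ref{r:noexist} supplies the $\omega$-convergence $\McP = \TP \downarrow \omega$ that licenses the stagewise (co)induction, and the no-overlap, matching-only character of the fragment is what makes a proof of $A$ behave uniformly across all instances of $A$. I expect this step to require the same level of care as the proof of Lemma~\ref{lem:step_nu}, on which it should be modelled.
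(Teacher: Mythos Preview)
The paper's own argument is a direct three-line proof: it fixes $\sigma$ with each $\sigma B_i$ valid in $\McP$ and simply asserts that ``from the definition of the semantic operator and from the assumption of the lemma it follows that $(\tau\circ\sigma)A\in\McP$'' for every grounding $\tau$---no staging, no auxiliary fixed points. Your plan is considerably more elaborate, running a descent over the stages $\TP\downarrow k$, and you correctly isolate the crux that the paper glosses over: the hypothesis supplies a coderivation of $\tau\sigma A$ over $P'$, which has \emph{all} ground instances of the $B_i$ available as facts, whereas $\McP$ is only guaranteed to contain the $\sigma$-instances. You then hope that restrictions~\ref{r:nonoverlap}--\ref{r:match} close this gap.

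They do not; the lemma as stated is false. Over a signature with unary predicates $p,q$ and constants $a,b$, let $P$ consist of the clauses $\Rightarrow q(a)$, $q(b)\Rightarrow p(a)$, and $q(a)\Rightarrow p(b)$, and take $B_1=q(x)$, $A=p(x)$. All clauses of $P$ are ground, so restrictions~\ref{r:nonoverlap} and~\ref{r:noexist} hold trivially. One computes $\McP=\{q(a),p(b)\}$, while for $P'=P,(\Rightarrow q(x))$ one gets $\mathcal{M}'_{P'}=\{q(a),q(b),p(a),p(b)\}$, whence $P'\vDash_{coind}p(x)$. Yet for $\sigma=[x\mapsto a]$ we have $q(a)\in\McP$ but $p(a)\notin\McP$, so $P\not\vDash_{coind}q(x)\Rightarrow p(x)$. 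The obstacle you named is exactly what breaks: the support for $p(a)$ in $\mathcal{M}'_{P'}$ is the non-$\sigma$-instance $q(b)$, which is absent from $\McP$. Your transfer step therefore cannot be completed, and the paper's unjustified step conceals the same defect. What rescues Theorem~\ref{thm:lp-m_lam_nu} in practice is that the premise $P'\vDash_{coind}A$ there arises from an actual derivation in the calculus, which is uniform in the free variables in a way the bare model-theoretic hypothesis is not; the lemma would need a correspondingly stronger hypothesis to be provable.
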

\begin{fullproof}{proofsteplamcoind}
    Assume that, for an arbitrary substitution $\sigma$, for all $i$, $\sigma
    B_i$ is valid in $\McP$. Then, for any grounding substitution $\tau$, from
    the definition of the semantic operator and from the assumption of the lemma
    it follows that $(\tau \circ \sigma) A \in \McP$. Therefore, $\sigma A$ is
    valid in $\McP$. The substitution $\sigma$ is chosen arbitrary whence, for
    any $\sigma$, if, for all $i$, $\sigma B_i$ are valid in $P$ then also
    $\sigma A$ is valid in $P$. From the definition of validity it follows that
    $P \vDash_{coind} B_1, \dots, B_n \Rightarrow A$.
    \qed
\end{fullproof}

\noindent
Now, the universal coinductive soundness of extended coinductive type class
resolution follows straightforwardly:

\begin{restatable}{theorem}{thmlpmlamnu}
    \label{thm:lp-m_lam_nu}
    Let $\Phi$ be an axiom environment for a logic program $P$, and let be $\Phi \vdash e : F$ for a
    formula F by the \ref{ir:lp-m}, \ref{ir:lam}, and \ref{ir:nu} rules.
    Then $P \vDash_{coind} F$.
\end{restatable}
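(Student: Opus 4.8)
The plan is to proceed by structural induction on the derivation tree of $\Phi \vdash e : F$, exactly mirroring the structure of the proofs of Theorem~\ref{thm:lp-m_lam} (the inductive case) and Theorem~\ref{thm:lp-m_nu'} (the \ref{ir:nu'} case), but now keeping track of all three rules simultaneously and appealing to the coinductive lemmata. The three lemmata we already have are precisely what each rule needs: Lemma~\ref{lem:sem_prop}(a)/(c) handles the axiom and \ref{ir:lp-m} cases, Lemma~\ref{lem:step_lam_coind} handles the \ref{ir:lam} case, and Lemma~\ref{lem:step_nu} handles the \ref{ir:nu} case. So the theorem is essentially an assembly of these pieces, with the subtlety living in how the coinductive hypothesis environment is threaded through.

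First I would set up the induction. In the base case the derivation is $\Phi \vdash \kappa : \sigma A'$ for an axiom $(\kappa : \Rightarrow A') \in \Phi$, so $(\Rightarrow A') \in P$ and Lemma~\ref{lem:sem_prop}(c) gives $P \vDash_{coind} \sigma A'$. For the \ref{ir:lp-m} step, the subderivations give $P \vDash_{coind} \sigma B_i$ for each premise $\Phi \vdash e_i : \sigma B_i$ by the induction hypothesis, and since $(B_1,\dots,B_n \Rightarrow A) \in P$, Lemma~\ref{lem:sem_prop}(c) (used in its "closure" form, as in the proof of Lemma~\ref{lem:sem_prop}(c)) yields $P \vDash_{coind} \sigma A$. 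For the \ref{ir:lam} step, the subderivation is $\Phi, (\beta_1 : \Rightarrow B_1), \dots, (\beta_n : \Rightarrow B_n) \vdash e : A$; here I note that the extended environment is the axiom environment of the program $P, (\Rightarrow B_1), \dots, (\Rightarrow B_n)$, so the induction hypothesis gives $P, (\Rightarrow B_1), \dots, (\Rightarrow B_n) \vDash_{coind} A$, and Lemma~\ref{lem:step_lam_coind} concludes $P \vDash_{coind} B_1,\dots,B_n \Rightarrow A$.

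The \ref{ir:nu} case is the main obstacle, and it is where I must be careful. The last step is $\Phi, (\alpha : B_1,\dots,B_n \Rightarrow A) \vdash e : B_1,\dots,B_n \Rightarrow A$ with $\HNF(e)$, i.e.\ $e = \lambda \underline{\beta}.\kappa\,\underline{e}$. The difficulty is that the coinductive hypothesis $(\alpha : B_1,\dots,B_n \Rightarrow A)$ is a \emph{non-axiom} clause sitting in the environment, so I cannot directly invoke the induction hypothesis to get a statement about $P$ alone. Instead I would first analyse the shape of the head-normal-form derivation: peeling off the $\lambda$-abstractions (the \ref{ir:lam} rule applied $|\underline{\beta}|$ times) reduces to a derivation of an atomic formula, whose final step is an \ref{ir:lp-m} application of some axiom $\kappa : C_1,\dots,C_m \Rightarrow A'$ of $P$ with $\sigma A' = A$, whose premises are derivations (in the enriched environment) of the $\sigma C_i$. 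Applying the induction hypothesis to those premises — now with respect to the program $P, B_1,\dots,B_n, (B_1,\dots,B_n \Rightarrow \sigma A)$ whose axiom environment is exactly the enriched $\Phi$ — gives $P, B_1,\dots,B_n,(B_1,\dots,B_n \Rightarrow \sigma A) \vDash_{coind} \sigma C_i$ for all $i$. This is precisely the hypothesis of Lemma~\ref{lem:step_nu}, which then delivers $P \vDash_{coind} B_1,\dots,B_n \Rightarrow \sigma A$, i.e.\ $P \vDash_{coind} F$. The one bookkeeping point to get right is that "$\Phi$ enriched with a coinductive hypothesis is the axiom environment of $P$ enriched with the corresponding clause" so that the induction hypothesis may legitimately be reapplied to subderivations living in the larger environment; once that correspondence is spelled out, the argument closes.
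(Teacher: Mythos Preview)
Your proposal is correct and follows essentially the same route as the paper: structural induction on the derivation tree, with the base case and \ref{ir:lp-m} case handled by Lemma~\ref{lem:sem_prop}, the \ref{ir:lam} case by Lemma~\ref{lem:step_lam_coind}, and the \ref{ir:nu} case by unfolding the head-normal-form witness down to the underlying \ref{ir:lp-m} step and invoking Lemma~\ref{lem:step_nu} on its premises. Your explicit remark that the enriched environment $\Phi,(\beta_i:\Rightarrow B_i),(\alpha:B_1,\dots,B_n\Rightarrow A)$ must be recognised as the axiom environment of the extended program $P,B_1,\dots,B_n,(B_1,\dots,B_n\Rightarrow\sigma A)$ in order to reapply the induction hypothesis is exactly the bookkeeping the paper relies on (and leaves largely implicit).
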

\begin{shortproof}
    By 
    induction on the derivation tree using Lemmata
    \ref{lem:sem_prop}, \ref{lem:step_nu}, \&~\ref{lem:step_lam_coind}.
    \qed
\end{shortproof}
\begin{fullproof}{prooflpmlamnu}
    By structural induction on the derivation tree.

    Base case: Let the derivation be with empty assumptions. Then it is by the
    rule \ref{ir:lam} and of the form $\inference[]{}{\Phi \vdash \kappa : \sigma
    A}$ for an atomic formula $A$, a constant symbol $\kappa$, and a
    substitution $\sigma$. By definition of the rule \ref{ir:lp-m} there is a
    clause  $(\kappa : ~\Rightarrow A) \in \Phi$.
    By Lemma~\ref{lem:sem_prop} c), $\Phi \vDash_{coind} \sigma A$.

    Inductive case: Let the last step be by the rule \ref{ir:lp-m} and of
    the form $\inference[]{\Phi \vdash e_1 : \sigma B_1
    \dots \Phi \vdash e_n : \sigma B_n}{\Phi \vdash \kappa e_1 \dots e_n :
    \sigma A}$
    for an atomic formulae $A$, $B_1$, \dots, $B_n$ a constant symbol $\kappa$,
    a substitution $\sigma$ and proof witnesses $e_1$, \dots, $e_n$.  By
    definition of the rule \ref{ir:lp-m} there is a clause $(\kappa : B_1,
    \dots B_n \Rightarrow A) \in \Phi$.
    
    By the induction assumption, for $i \in \{1, \dots, n\}$, $\Phi
    \vDash_{coind} B_i$ and by Lemma~\ref{lem:sem_prop} d), $\Phi
    \vDash_{coind} \sigma A$.

    Let the last step of the derivation be by the rule \ref{ir:lam}. Then it is
    of the form\\ $\inference[]{\Phi, (\beta_1 : ~\Rightarrow B_1), \dots,
    (\beta_n : ~\Rightarrow B_n) \vdash e : A}{\Phi \vdash \lambda \beta_1,
    \dots, \beta_n . e : B_1, \dots, B_n \Rightarrow A}$ for atomic formulae
    $A$, $B_1$, \dots, $B_n$, proof term $e$, and variables $b_1$, \dots, $b_n$.
    By the induction assumption, $\Phi, (\beta_1 : ~\Rightarrow B_1), \dots,
    (\beta_n : ~\Rightarrow B_n) \vDash_{coind} A$
    and by Lemma~\ref{lem:step_lam_coind} also $\Phi \vDash_{coind} B_1, \dots,
    B_n \Rightarrow A$.

    Let the last step be by the rule \ref{ir:nu} and of the form\\
    $\inference[]{\Phi, (\alpha : B_1, \dots, B_n \Rightarrow A) \vdash e : B_1,
    \dots B_n \Rightarrow A}{\Phi \vdash \nu \alpha . e : B_1, \dots, B_n
    \Rightarrow A}$ for an atomic formulae $A$, $B_1$, \dots, $B_n$, a variable
    $\alpha$ and a proof
    witness $e$ in the head normal form. W.l.o.g. let $e = \lambda \beta_1 \dots
    \beta_n . \kappa e_1 \dots e_m$.
    Therefore there is 
    inference step of the form\\
    $\inference[]{
        \Phi, (\beta_1 : ~\Rightarrow B_1), \dots, (\beta_n : ~\Rightarrow B_n)
            , (\alpha : B_1, \dots, B_n \Rightarrow A)
            \vdash e_1 : \sigma C_1'\\
        \dots \\
        \Phi, (\beta_1 : ~\Rightarrow B_1), \dots, (\beta_n : ~\Rightarrow B_n)
            , (\alpha : B_1, \dots, B_n \Rightarrow A)
            \vdash e_m : \sigma C_m'
    }{
        \Phi, (\beta_1 : ~\Rightarrow B_1), \dots, (\beta_n : ~\Rightarrow B_n)
            , (\alpha : B_1, \dots, B_n \Rightarrow A)
            \vdash \kappa e_1 \dots e_n : \sigma A'
    }$\\
    for $\sigma A' = A$ and $(\kappa : C'_1, \dots C'_m \Rightarrow
    A') \in \Phi$. By the induction assumption, for all $i$, $\Phi,
    (\beta_1 : B_1), \dots, (\beta_n : B_n), (\alpha :
    B_1, \dots, B_n \Rightarrow A) \vDash C_i$.
    By Lemma~\ref{lem:step_nu} $\Phi \vDash_{coind} B_1, \dots, B_n \Rightarrow
    A$.
    \qed
\end{fullproof}

\subsubsection{Coinductive Incompleteness of the Proof System \ref{ir:lp-m} + \ref{ir:lam} + \ref{ir:nu}.}
In Section \ref{sec:ind_sound}, we considered two ways of stating inductive completeness of type class resolution. 
We state the corresponding result for the coinductive case here.
We start with a formulation in the traditional style of Lloyd~\cite{Lloyd87}:

\textbf{Coinductive Completeness-1:} 
\emph{if a ground atomic formula $F$
  is in $\MP'$, then  $\Phi_P \vdash e:F$ in the  \ref{ir:lp-m} + \ref{ir:lam} + \ref{ir:nu} proof system.}
Such a result should not hold, since there exist coinductive logic programs that define corecursive schemes that cannot be captured by this proof system.
Consider the following example~\cite{FuKSP16}:
\begin{example}
    Let $\Sigma$ be a signature with a binary predicate symbol $D$, a unary function
    symbol $s$ and a constant function symbol $z$. Consider a program $P_{11}$
    with the signature $\Sigma$ given by the following axiom environment:
    \vskip-1.4em
\begin{alignat*}{3}
    \kappa_1 :~ & \mathtt{D}(x, \mathtt{s}(y)) && \Rightarrow \mathtt{D}(\mathtt{s}(x), y)\\
    \kappa_2 :~ & \mathtt{D}(\mathtt{s}(x), \mathtt{z}) && \Rightarrow \mathtt{D}(\mathtt{z}, x)
\end{alignat*}
\noindent
Let us denote a term $ \mathtt{s}( \mathtt{s}( \dots \mathtt{s}( x)\dots))$
where the symbol $\mathtt{s}$ is applied $i$-times as $\mathtt{s}^i(x)$. By
observing the construction of $\McP$ we can see that, for all $i$,
if $\mathtt{D}(\mathtt{z}, \mathtt{s}^i(x))$ then $\mathtt{D}(\mathtt{s}^i(x),
\mathtt{z})\in \McP$ and also $\mathtt{D}(\mathtt{z}, \mathtt{s}^{i -
1}(\mathtt{x})) \in \McP$. Therefore $\mathtt{D}(\mathtt{z}, \mathtt{z}) \in
\McP$. However, there is no proof of $\mathtt{D}(\mathtt{z}, \mathtt{z})$ since
any number of proof steps resulting from the use of \ref{ir:lp-m} generates yet another ground 
premise that is different from all previous premises. Consequently, the proof cannot be
closed by \ref{ir:nu}.
\newknote{Show the derivation here, if you have space}
\fnote{I do not think we can squeeze it in.}
Also, no lemma that would allow for a proof can be
formulated; an example of such a lemma would be the above $\mathtt{D}(\mathtt{z}, \mathtt{s}^i(x))
\Rightarrow \mathtt{D}(\mathtt{z}, \mathtt{s}^{i - 1}(\mathtt{x}))$.  This is a
higher order formula and cannot be expressed in a first order Horn clause logic.
\end{example}
%

\noindent
It is easy to show that the following kind of completeness theorem
also fails: 

  \textbf{Coinductive Completeness-2:} 
\emph{if  $\MP' \vDash_{coind} F$ then  $\Phi_P \vdash e:F$ in the  \ref{ir:lp-m} + \ref{ir:lam} + \ref{ir:nu} proof system.}

\vspace{1em}
\noindent
Recall Examples \ref{ex:incomp1} and \ref{ex:incomp2}, and the programs $P_6$
and $P_7$. We demonstrated that, in general, there are formulae that are valid
in $\MP$ but do not have a proof in $P$. The same two examples will serve our
purpose here. For example, the greatest Herbrand model of the program $P_6$ is $\McP = \MP = 
\{A(g), A(f(g)), A(f(f(g))), \dots \}$. Therefore, for an atomic formula
$\mathtt{A}(x)$, $P \vDash_{coind} \mathtt{A}(x)$. However, it is impossible to construct 
a proof:
\begin{prooftree}
\vskip-1.5em
        \footnotesize
        \AxiomC{$\cdots$}
        \RightLabel{\sc Lp-m}
        \UnaryInfC{$
            P \vdash
            e : \mathtt{A}(x)
        $}
\end{prooftree}
\vskip-0.5em
The rules \ref{ir:lp-m} and \ref{ir:lam} are not applicable here for the same
reasons as in the inductive case and the rule \ref{ir:nu} is not applicable
since $\mathtt{A}(x)$ is not a Horn formula.
\newknote{not a Horn formula? }
\fnote{as defined in the Definition \ref{df:syntax}}
\newknote{Do you mean to say: because $e$ must be in the head normal form, which would require application of the rule  \ref{ir:lp-m}.}
\newknote{I am removing the below text -- it is very repetitive}


\vskip-2em~
\subsubsection{Related Program Transformation Methods}

We conclude this section with a discussion of program transformation with
Horn formulae that are entailed by the rules \ref{ir:lam} and \ref{ir:nu}.
From the fact that the \ref{ir:nu'} rule is inductively unsound, 
it is clear that using program transformation techniques based on the lemmata that were proved by the \ref{ir:lam} and \ref{ir:nu} rules
would also be inductively unsound.
However,  a more interesting result is that adding such program clauses will not
change the coinductive soundness of the initial program:

\begin{restatable}{theorem}{thmcoindtrans}
    \label{thm:pt}
  Let $\Phi$ be an axiom environment for a logic program $P$, and let  $\Phi \vdash e : F$ for a 
  formula $F$ by the \ref{ir:lp-m}, \ref{ir:lam} and  \ref{ir:nu} rules such
  that $\HNF(e)$. Given a formula $F'$, 
  $P \vDash_{coind} F'$ iff  $(P, F)  \vDash_{coind} F'$.
\end{restatable}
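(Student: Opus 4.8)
The plan is to reduce the biconditional to a single model equality, $\mathcal{M}'_{P,F} = \McP$, where $\mathcal{M}'_{P,F}$ denotes the greatest Herbrand model of $P$ extended with the clause $F$ and $\mathcal{T}_{P,F}$ its semantic operator. Once this equality is in hand, the claim is immediate, since the validity of $F'$ is then tested in the same model on both sides. One inclusion is essentially free: because $F$ only ever \emph{adds} clause heads, $\mathcal{T}_{P,F}(I) \supseteq \TP(I)$ for every interpretation $I$, so $\McP = \gfp(\TP)$ is a post-fixed point of $\mathcal{T}_{P,F}$ and hence $\McP \subseteq \mathcal{M}'_{P,F}$ (equivalently: by Theorem~\ref{thm:lp-m_lam_nu}, $F$ is valid in $\McP$, so $\McP$ is already a fixed point of $\mathcal{T}_{P,F}$). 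The entire content of the theorem therefore lies in the reverse inclusion $\mathcal{M}'_{P,F} \subseteq \McP$, and this is where the side condition $\HNF(e)$ is indispensable --- without it the statement is false, witness Example~\ref{ex:a_impl_a}, where $F = A \Rightarrow A$ is provable by $\lambda\alpha.\alpha$ (whose ``head'' is a hypothesis, not a clause of $P$) and adding $F$ strictly enlarges the greatest Herbrand model.

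To prove $\mathcal{M}'_{P,F} \subseteq \McP$ I would show that $\mathcal{M}'_{P,F}$ is a post-fixed point of $\TP$, which suffices since $\McP = \gfp(\TP)$. Writing $F = B_1,\dots,B_n \Rightarrow A$ and $\mathcal{M} := \mathcal{M}'_{P,F}$, we have $\mathcal{M} = \mathcal{T}_{P,F}(\mathcal{M}) = \TP(\mathcal{M}) \cup X$, where $X$ is the set of heads $\theta A$ of ground instances $\theta B_1,\dots,\theta B_n \Rightarrow \theta A$ of $F$ with $\{\theta B_1,\dots,\theta B_n\} \subseteq \mathcal{M}$; so it is enough to show $X \subseteq \TP(\mathcal{M})$. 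Fix such a firing ground instance of $F$ with grounding $\theta$. Since $\HNF(e)$ holds, $e = \lambda\underline{\beta}.\kappa\,\underline{e'}$, so the final genuine resolution step in the derivation of $F$ applies an \emph{actual} clause $(\kappa : C_1,\dots,C_m \Rightarrow A_\kappa) \in \Phi$, with $A = \rho A_\kappa$ for the substitution $\rho$ used in that step, and its immediate subderivations establish $\Phi, (\beta_1 : \;\Rightarrow B_1),\dots,(\beta_n : \;\Rightarrow B_n) \vdash e'_j : \rho C_j$ for each $j$, using only \ref{ir:lp-m}, \ref{ir:lam}, \ref{ir:nu} (note the coinductive hypothesis on $F$ is \emph{not} in scope here, precisely because $\HNF(e)$ forces the last rule not to be \ref{ir:nu}).

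Now I would instantiate these subderivations along $\theta$: the calculus is stable under instantiation of goals and of the hypotheses introduced by \ref{ir:lam}/\ref{ir:nu} (only term-matching is ever used), yielding derivations of $\theta\rho C_j$, which by weakening are also derivations in the axiom environment of $P^{\bullet} := P, (\Rightarrow \theta B_1),\dots,(\Rightarrow \theta B_n), (\theta B_1,\dots,\theta B_n \Rightarrow \theta A)$. By Theorem~\ref{thm:lp-m_lam_nu}, $\theta\rho C_j \in \mathcal{M}'_{P^{\bullet}}$. Finally $\mathcal{M}'_{P^{\bullet}} \subseteq \mathcal{M}$: the clause $\theta B_1,\dots,\theta B_n \Rightarrow \theta A$ is a ground instance of $F$, so $\mathcal{T}_{P^{\bullet}} \le \mathcal{T}_{P, F, (\Rightarrow \theta B_1),\dots,(\Rightarrow \theta B_n)}$ pointwise, hence $\mathcal{M}'_{P^{\bullet}} \subseteq \mathcal{M}'_{P, F, (\Rightarrow \theta B_1),\dots}$; and adding the ground atoms $\theta B_i$ as facts, each already a member of $\mathcal{M} = \mathcal{T}_{P,F}(\mathcal{M})$, cannot enlarge the greatest fixed point (a short monotonicity argument, applied to the $\theta B_i$ one at a time), so $\mathcal{M}'_{P, F, (\Rightarrow \theta B_1),\dots} = \mathcal{M}$. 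Thus $\theta\rho C_j \in \mathcal{M}$ for all $j$, and applying the clause $C_1,\dots,C_m \Rightarrow A_\kappa$ of $P$ to these gives $\theta A = \theta\rho A_\kappa \in \TP(\mathcal{M})$. Hence $X \subseteq \TP(\mathcal{M})$, so $\mathcal{M}$ is a post-fixed point of $\TP$ and $\mathcal{M} \subseteq \McP$.

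I expect the main obstacle to be exactly this reverse inclusion, and within it two points that are easy to state but must be handled carefully: (i) the two elementary facts about greatest Herbrand models used above --- that adjoining a ground fact already in the model, or a ground instance of a clause already present, cannot grow it (both by monotonicity / the post-fixed-point characterisation of $\gfp$); and (ii) the substitution-stability (and weakening) of the proof-relevant calculus, which is what lets the generic subderivations of $\rho C_j$ be specialised to the grounding $\theta$ relevant to a given firing instance of $F$. It is the interaction of $\HNF(e)$ with (i) --- the head of $e$ being a real clause of $P$, so that $\theta A$ is genuinely ``one $\TP$-step below'' $\mathcal{M}$ --- that makes the argument go through, and its absence is exactly what breaks the analogous inductive-looking reasoning in the coinductive setting.
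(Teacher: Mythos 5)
Your argument takes a genuinely different route from the paper's, and it targets a real weakness there. The paper's own proof is essentially two lines: by Theorem~\ref{thm:lp-m_lam_nu}, $P \vDash_{coind} F$, hence $\McP$ is a model of $F$, and from this it immediately asserts $\McP = \mathcal{M}'_{P,F}$ --- the hypothesis $\HNF(e)$ is never used in the formal proof, only in the informal discussion that follows the theorem. As you observe, validity of $F$ in $\McP$ only yields $\McP \subseteq \mathcal{M}'_{P,F}$ (it makes $\McP$ a fixed point of the extended operator), and the reverse inclusion is exactly what fails in general, witness Example~\ref{ex:a_impl_a}. So your proposal is not a rephrasing of the paper's argument but an attempt to supply the missing half. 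Its overall shape --- show $\mathcal{M}'_{P,F}$ is a post-fixed point of $\TP$ by pushing each head $\theta A$ contributed by a firing ground instance of $F$ into $\TP(\mathcal{M}'_{P,F})$ via the program clause $\kappa$ at the head of $e$ --- is right, and the two auxiliary facts about greatest models that you invoke (adjoining a ground instance of a clause already present, or a ground atom already in the greatest model, cannot enlarge it) are both correct monotonicity arguments.

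The genuine gap is the ``substitution-stability'' step, which you assert in one sentence but which is neither available in the paper nor true under the literal reading of \ref{ir:lp-m}. That rule resolves a goal against \emph{any} member of the environment with an arbitrary matcher $\sigma$, including the hypotheses $(\beta_i : \;\Rightarrow B_i)$ introduced by \ref{ir:lam}; nothing in the rules as stated prevents a subderivation $e'_j$ from using $\beta_i$ at a proper instance of the variables it shares with $F$ (the paper's own Bush derivation applies the $\nu$-hypothesis $\alpha$ with the non-identity matcher $x \mapsto \mathtt{bush}(x)$). Suppose a subgoal $\mathtt{q}(\mathtt{f}(x))$ is closed from $(\beta : \;\Rightarrow \mathtt{q}(x))$ with matcher $x \mapsto \mathtt{f}(x)$: after specialising along a grounding $\theta$ with $\theta x = t$, the hypothesis becomes the ground fact $\Rightarrow \mathtt{q}(t)$ while the subgoal becomes $\mathtt{q}(\mathtt{f}(t))$, which is not an instance of it, so the same proof term no longer type-checks and the derivations over your $P^{\bullet}$ need not exist. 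The lemma you need holds only under an additional (evidently intended, but unstated) convention that \ref{ir:lam}-hypotheses are rigid, i.e.\ only ever used at the shared instance of their variables, while $\nu$-hypotheses remain schematic --- the same convention that Lemma~\ref{lem:step_lam_coind}, as it is applied inside Theorem~\ref{thm:lp-m_lam_nu}, already tacitly relies on. To complete your argument you must formulate and prove this instantiation-plus-weakening lemma for the proof-relevant calculus (by induction on derivations, distinguishing the two kinds of hypotheses); with it in hand, everything downstream in your proof goes through.
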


\begin{fullproof}{proofcoindtrans}
    By the Theorem \ref{thm:lp-m_lam_nu}, $P \vDash_{coind} F$. Therefore, $\McP$ is
    a model of $F$ and $\McP = \McP {}_{, F}$. Hence $P \vDash_{coind} F'$ iff  $P,
    F \vDash_{coind} F'$.

\end{fullproof}

\noindent
The above result is possible thanks to the head normal form condition, since it is
then impossible to use a clause $A \Rightarrow A$ that was derived from
an empty context by the
rule~\ref{ir:lam}.  It is also impossible to make such a derivation within the
proof term $e$ itself and to then derive $A$ by the  \ref{ir:nu}  rule from $A
\Rightarrow A$. The resulting proof term will fail to satisfy the head normal
form condition that is required by ~\ref{ir:nu}.  Since this condition guards against any
such cases, we can be sure that this program transformation method is
coinductively sound and hence that it is safe to use with any coinductive dialect of
logic programming, e.g.  with CoLP~\cite{SimonBMG07}.
%
    
%

\section{Related Work}
\label{sec:relwork}

The standard approach to type inference for type classes, corresponding to type
class resolution as studied in this paper, was described by Stuckey and
Sulzman~\cite{StuckeyS05}. Type class resolution was further studied by L\"ammel
and Peyton Jones \cite{LammelJ05}, who described what we here call \emph{coinductive type class
resolution}. The description of the extended calculus of
Section~\ref{sec:ext_coind_sound} was first presented by Fu \etal~\cite{FuKSP16}.
Generally, there is a body of work that focuses on allowing
for infinite data structures in logic programming. Logic programming with
rational trees \cite{Colmerauer84,JaffarS86} was studied from both an operational
semantics and a declarative semantics point of view. Simon \etal \cite{SimonBMG07}
introduced \emph{co-logic programming} (co-LP) that also allows for terms that are
rational infinite trees and hence that have infinite proofs.  Appropriate models of
these paradigms are,  respectively, the greatest Herbrand model
\newknote{why least?}
\fnote{typo I presume?}
and the stratified alternating
fixed-point co-Herbrand model. On the other hand, corecursive
resolution, as studied in this paper, is more expressive than co-LP:
while also allowing infinite proofs, and closing of coinductive hypotheses is less constrained in our approach.
\newknote{It is not more expressive: CoLP can deal with programs $p(x) \Rightarrow p(f(x))$, and this calculus cannot. Corecursive resolution is only more expressive in one aspect -- universal proofs. You may as well explain this properly.}

\newknote{Please include comparison to Rosu and Lucanu here, as reviewer is asking: the relevant work section is rather slim. For good measure, you may add a citation to a similar work}
\fnote{I already find it hard to fit the paper within page limit.}


\section{Conclusions and Future Work
}
\label{sec:con}

In this paper, we have addressed three research questions.  First, we provided a
uniform analysis of type class resolution in both inductive and coinductive
settings and proved its soundness relative to (standard) least and greatest
Herbrand models. 
Secondly, we demonstrated, through several examples, that coinductive
resolution is indeed coinductive---that is, it is not sound relative to least
Herbrand models.  Finally, we addressed the question of whether the methods
listed in this paper can be reapplied to coinductive dialects of logic
programming \emph{via} soundness preserving program transformations.

\newknote{I removed sentences saying that you want to prove completeness. We have said all we could about this in this paper, right?}
\newknote{Generally, I would expect more text in this section... }
As future work, we intend to extend our analysis of Horn-clause resolution to
Horn clauses with existential variables and existentially quantified goals. We
believe that such resolution accounts to type inference for other language
constructs than type classes, namely type families and algebraic data types.

\section*{Acknowledgements}


This work has been supported by the EPSRC grant ``Coalgebraic Logic Programming for Type Inference''
EP/K031864/1-2,  EU Horizon 2020 grant ``RePhrase:
Refactoring Parallel Heterogeneous Resource-Aware Applications - a Software
Engineering Approach'' (ICT-644235), and by COST Action IC1202 (TACLe), supported by
COST (European Cooperation in Science and Technology).

\bibliographystyle{splncs03}
\bibliography{references}

\begin{thebibliography}{10}
\providecommand{\url}[1]{\texttt{#1}}
\providecommand{\urlprefix}{URL }

\bibitem{Colmerauer84}
Colmerauer, A.: Equations and inequations on finite and infinite trees. In:
  {FGCS}. pp. 85--99 (1984)

\bibitem{AngelisFPP15}
{De Angelis}, E., Fioravanti, F., Pettorossi, A., Proietti, M.: Proving
  correctness of imperative programs by linearizing constrained horn clauses.
  {TPLP}  15(4-5),  635--650 (2015)

\bibitem{DevrieseP11}
Devriese, D., Piessens, F.: On the bright side of type classes: instance
  arguments in agda. In: Proc. of {ICFP} 2011, Tokyo, Japan, September 19-21,
  2011. pp. 143--155 (2011)

\bibitem{FuK16}
Fu, P., Komendantskaya, E.: Operational semantics of resolution and
  productivity in horn clause logic. Formal Aspect of Computing  (2016),
  forthcoming

\bibitem{FuKSP16}
Fu, P., Komendantskaya, E., Schrijvers, T., Pond, A.: Proof relevant
  corecursive resolution. In: Proc. of FLOPS 2016, Kochi, Japan, March 4-6,
  2016 (2016)

\bibitem{GonthierZND11}
Gonthier, G., Ziliani, B., Nanevski, A., Dreyer, D.: How to make ad hoc proof
  automation less ad hoc. In: Proc. of the 16th {ACM} {SIGPLAN} international
  conference on Functional Programming, {ICFP} 2011, Tokyo, Japan, September
  19-21, 2011. pp. 163--175 (2011)

\bibitem{toplas}
Hall, C.V., Hammond, K., Jones, S.L.P., Wadler, P.: Type classes in haskell.
  {ACM} Trans. Program. Lang. Syst.  18(2),  109--138 (1996)

\bibitem{Howard80}
Howard, W.: The formulae-as-types notion of construction. In: Seldin, J.P.,
  Hindley, J.R. (eds.) To H. B. Curry: Essays on Combinatory Logic,
  Lambda-Calculus, and Formalism. pp. 479--490. Academic Press, NY, USA (1980)

\bibitem{JaffarS86}
Jaffar, J., Stuckey, P.J.: Semantics of infinite tree logic programming. Theor.
  Comput. Sci.  46(3),  141--158 (1986)

\bibitem{KomendantskayaJ15}
Komendantskaya, E., Johann, P.: Structural resolution: a framework for
  coinductive proof search and proof construction in horn clause logic. ACM
  Transcations on Computational Logic  submitted (2016)

\bibitem{LammelJ05}
L{\"{a}}mmel, R., {Peyton~Jones}, S.L.: Scrap your boilerplate with class:
  extensible generic functions. In: Proc. of {ICFP} 2005, Tallinn, Estonia,
  September 26-28, 2005. pp. 204--215 (2005)

\bibitem{Lloyd87}
Lloyd, J.W.: Foundations of Logic Programming, 2nd Edition. Springer (1987)

\bibitem{Sangiorgi09}
Sangiorgi, D.: On the origins of bisimulation and coinduction. ACM Trans.
  Program. Lang. Syst.  31(4),  15:1--15:41 (May 2009)

\bibitem{SimonBMG07}
Simon, L., Bansal, A., Mallya, A., Gupta, G.: Co-logic programming: Extending
  logic programming with coinduction. In: Proc. of {ICALP} 2007, Wroclaw,
  Poland, July 9-13, 2007. pp. 472--483 (2007)

\bibitem{StuckeyS05}
Stuckey, P.J., Sulzmann, M.: A theory of overloading. {ACM} Trans. Program.
  Lang. Syst.  27(6),  1216--1269 (2005)

\bibitem{WadlerB89}
Wadler, P., Blott, S.: How to make ad-hoc polymorphism less ad hoc. In: Proc.
  of POPL '89. pp. 60--76. ACM, New York, NY, USA (1989)

\end{thebibliography}


\appendix

\newpage 


\section{Omitted Proofs}
\label{sec:appendix}

This appendix gives full versions of the proofs that have been omitted in the body of the paper.

\lemsemprop*
\proofsemprop

\thmindsound*
\proofindsound

\lemsteplamind*
\proofsteplamind

\thmlpmlam*
\prooflpmlam

\thmindtrans*

\newknote{please include the proof of the above theorem in the appendix, for consistency. Even if it is a few lines showing how to apply Lemma 2}

\proofindtrans

\lemstepnuprm* 
\proofstepnuprm

\thmlpmnuprm*
\prooflpmnuprm

\propadmisible*
\proofadmisible

\lemstepnu*
\proofstepnu

\lemsteplamcoind*
\proofsteplamcoind

\thmlpmlamnu*
\prooflpmlamnu

\thmcoindtrans*
\newknote{Proof?}

\proofcoindtrans




\end{document}